\newtheorem{lemma}{Lemma}
\theoremstyle{definition}
\newtheorem{example}{Example}
\theoremstyle{remark}
\newcommand{\pinner}{\mathbin{\mathchoice
   {\hbox{\vrule width0.6em depth0pt height0.4pt
   \vrule width0.4pt depth0pt height0.8ex}}
   {\hbox{\vrule width0.6em depth0pt height0.4pt
   \vrule width0.4pt depth0pt height0.8ex}}
   {\hbox{\kern0.14em
   \vrule width0.48em depth0pt height0.4pt
   \vrule width0.4pt depth0pt height0.6ex\kern0.14em}}
   {\hbox{\kern0.1em
   \vrule width0.39em depth0pt height0.4pt
   \vrule width0.4pt depth0pt height0.5ex\kern0.1em}}}}
\newcommand{\inner}{\pinner\,}
\DeclareMathOperator{\img}{im}
\DeclareMathOperator{\Mat}{Mat}
\DeclareMathOperator{\Ann}{Ann}
\newcommand{\tu}{\tilde{u}}
\newcommand{\bE}{\mathbf{E}}
\newcommand{\cE}{\mathcal{E}}
\newcommand{\cEinf}{\mathcal{E}^{\infty}}
\newcommand{\cC}{\mathcal{C}}
\newcommand{\veps}{\varepsilon}
\newcommand{\BBP}{\mathbb{P}}
\newcommand{\BBR}{\mathbb{R}}
\newcommand{\BBN}{\mathbb{N}}
\newcommand{\BBC}{\mathbb{C}}
\newcommand{\bx}{\boldsymbol{x}}
\newcommand{\gm}{\mathfrak{m}}
\newcommand{\dd}{\partial}
\newcommand{\Id}{{\mathrm d}}
\newcommand{\fg}{\mathfrak{g}}
\newcommand{\bw}{{\boldsymbol{w}}}
\newcommand{\bv}{{\boldsymbol{v}}}
\newcommand{\pb}{\dd_{\bw}}
\newcommand{\gl}{\mathfrak{gl}}
\newcommand{\gsl}{\mathfrak{sl}}
\newcommand{\ii}{\boldsymbol{i}}
\theoremstyle{definition}
\newtheorem{definition}{Definition}
\newtheorem{theorem}{Theorem}
\theoremstyle{remark}
\newtheorem{remark}{Remark}
\begin{document}
\pagestyle{plain}

\title
{Gardner's deformation of the Krasil'shchik\/--\/Kersten system}

\author{Arthemy V Kiselev${}^1$ and Andrey O Krutov${}^{2}$}

\address{${}^1$ Johann Bernoulli Institute for Mathematics and Computer Science,
  University of Groningen,
  P.O.Box~407, 9700\,AK Groningen, The Netherlands}
\address{${}^2$ Department of Higher Mathematics, Ivanovo State Power
  University, Rabfa\-kov\-skaya str.~34, Ivanovo, 153003 Russia}

\ead{A.V.Kiselev@rug.nl, 
krutov@math.ispu.ru}

\begin{abstract}
The classical problem of construction of Gardner's deformations 
for infinite\/-\/di\-men\-si\-o\-nal completely integrable systems of evolutionary partial differential equations (PDE) amounts essentially to finding the recurrence relations between the integrals of motion.
Using the correspondence between the zero\/-\/curvature representations 
and Gardner deformations for PDE, 
we construct a Gardner's deformation for the Krasil'shchik\/--\/Kersten system. 
For this, we introduce the new nonlocal variables in such a way that the rules to differentiate them are consistent by virtue of the equations at hand and
second,
the full 
system of Krasil'shchik\/--\/Kers\-ten's equations and the new rules
contains the 
Korteweg\/--\/de Vries 
equation and classical Gard\-ner's deformation for it.
\\[2pt]
\textbf{Keywords}: Integrable hierarchies, Krasil'shchik\/--\/Kersten system,
conservation laws, Gardner's deformations, zero\/-\/curvature representations.
\end{abstract}

\section{Introduction}
The search for conservation laws and, in particular, the search for regular methods of construction of
conservation laws~\cite{Miura68,Gardner} are the classical problems in the theory of infinite\/-\/dimensional completely
integrable systems. The existence of infinitely many integrals of motion allows one to detect relevant
bi\/-\/Hamiltonian hierarchies~\cite{Magri}; we refer to~\cite{NewellSolitons} for an almost detective story how
the quest for integrability took place some fifty years ago. Suppose now that ``all'' or ``sufficiently many''
nontrivial conserved currents are known for a PDE~system under study. Then one can use them to balance
solutions at the adjacent faces of a shock wave. Second, each conserved current for a system~$\cE$ with two
independent variables (e.g., the time~$t$ and spatial coordinate~$x$) determines an \emph{Abelian}
nonlocality\footnote{The new variables which we introduce in section~\ref{secGDKK} of this paper over the
  Krasil'shchik\/--\/Kersten system do not belong to the class of Abelian nonlocalities for that model: the
  new structures stem from a zero\/-\/curvature representation, which corresponds to the non\/-\/Abelian picture.}
over~$\cE$; resolving the analytic obstructions, several layers of the new variables introduced in such a way
are often enough for finding a recursion operator or symplectic structures~\cite{JKLstarCov}. Third, the
knowledge of integrals of motion helps one to increase the precision of numerical simulations. Finally, we
recall that advanced analytic methods for solution of Cauchy problems for nonlinear PDEs refer explicitly or
tacitly to various types of conservation in the model at hand. How can the infinite chains of conservation
laws be sought for in a systematic way\,?

Let us specify at once that in this paper we address the problem of finding recurrence relations between the
conserved quantities. In other words, we discard the trivial idea of 
try\/-\/and\/-\/fault search for \emph{isolated}
solutions of the determining equations for generating sections of conservation laws~\cite{BVV}; software for
symbolic calculations~\cite{SsTools,MarvanJets} can be used here. That approach would rely on the standard
techniques in the local geometry of differential equation~\cite{BVV,VinogradovCSpecI,VinogradovCSpecII}; it
often involves various case\/-\/dependent tricks such as the use of 
weight\/-\/homogeneity calculus for scaling\/-\/invariant systems.

Such an algorithm put aside, there remain three strategies to mention. The 
Lax\/-\/pair approach and its
generalisation by Zakharov, Manakov, and Shabat prescribes the calculation of residues for fractional powers
of the spectral operator~\cite{Lax,Faddeev}. Let us note that in the frames of this approach, each conserved
density~$\rho_i$ is obtained without any reference to the already known ones; 
it is then hard to detect any
relation between elements of the infinite sequence with~$i\in\BBN$.

Another option would be to take a ``good'' recursion operator $R\colon \varphi_i \mapsto \varphi_{i+1}$ for
local symmetries of the evolutionary system at hand; the adjoint operator
$R^{\dag}\colon\psi_i\mapsto\psi_{i+1}$ is known to propagate the \emph{cosymmetries} (this is true for the
class of evolutionary models, see~\cite{JKLstarCov}). With a bit of luck, one has that $\psi_i = \bE(\rho_i)$,
meaning that those cosymmetries~$\psi_i$ are true generating sections of nontrivial conserved currents whose
conserved densities are~$\rho_i$ (we denote by~$\bE$ the Euler variational derivative). The densities~$\rho_i$
can be reconstructed by the homotopy formula (see~\cite{Olver} and references therein). It is readily seen
that for bi\/-\/Hamiltonian hierarchies this algorithm can be simplified. Namely, by taking the tower of commuting
flows given by $\varphi_i = J_1(\psi_{i+1}) = J_2(\psi_i)$, one inverts the senior Hamiltonian operator~$J_2$
and then reconstructs the next Hamiltonian functional with density~$\rho_{i+1}$. (The formalism of
Hamiltonian structures for non\/-\/evolutionary systems, which exceeds the frames of this text, was developed by
Krasil'shchik \textit{et al.} in~\cite{JKLstarCov,JKAbel,JKVerbTopical}.) However, we conclude that again, the
second approach yields the hoard of conserved densities~$\rho_i$ but not an explicit relation between them.

Let us recall finally that the renowned seminal paper~\cite{Miura68} led not only to our understanding of the
geometry of zero\/-\/curvature representations and Miura transformations, but also to an important class of
deformation techniques for integrable models~\cite{Fordy83,KuperIrish,TMPh2006}. Gardner's deformations are
the powerful instrument that produces explicit recurrence relations between the integrals of motion (and
moreover, \emph{only} between them, which means that neither the fractional powers of any pseudodifferential
operators are involved nor the fragile homotopy formula is used to invert the Euler variational derivative).

It is the Gardner deformation problem for the Krasil'shchik\/--\/Kersten system which we solve in this paper,
yet it should be emphasised that the solution method which we implement here can be applied to a wide class of
deformation problems for nonlinear models of mathematical physics.

The Krasil'shchik\/--\/Kersten system of two evolution 
equations,\footnote{The choice of sign for the coefficient $u_{12}=-\underline{u}{}_{12}$ reflects the freedom of order, $\theta_1\theta_2\otimes u_{12}=
\theta_2\theta_1\otimes \underline{u}{}_{12}$, in the superfield expansion
$\boldsymbol{u}=1\otimes u_0 + \theta_1\otimes u_1 + \theta_2\otimes u_2 + \theta_1\theta_2\otimes u_{12}$, where $\theta_1\theta_2=-\theta_2\theta_1$ is the product of Grassmann variables.}
\begin{subequations}\label{eqKK}
\begin{align}
\underline{u}{}_{12;t} &={}- \underline{u}{}_{12;xxx} + 6\underline{u}{}_{12}\underline{u}{}_{12;x} - 3u_0u_{0;xxx} - 3u_{0;x}u_{0;xx} + 3\underline{u}{}_{12;x}u_0^2 + 6\underline{u}{}_{12}u_0u_{0;x}, \\
u_{0;t} &={} - u_{0;xxx} + 3u_0^2u_{0;x} + 3\underline{u}{}_{12}u_{0;x} 
+ 3\underline{u}{}_{12;x}u_0.
\end{align}
\end{subequations}
is the bosonic limit of the $N{=}2$ supersymmetric $a{=}1$ Korteweg\/--\/de Vries equation~\cite{MathieuNew}:
to obtain system~\eqref{eqKK} from the super\/-\/equation, one sets equal to zero the two fermionic components~$u_1$,\ $u_2$ of
the superfield~$\boldsymbol{u}$ that contains $N{=}2$ Grassmann variables~$\theta_1$ and~$\theta_2$. The parent $N{=}2$ SKdV equation with $a{=}1$ belongs
to the triplet $a\in\{-2, 1, 4\}$ of completely integrable cases, see~\cite{MathieuNew,MathieuTwo}. The issue
of integrability at $a{=}1$ is difficult, compared to the well\/-\/studied cases $a{=}-2$ and $a{=}4$ (e.g.,
see~\cite{MathieuOpen} and~\cite{PopowiczLax}, also~\cite{N2Hirota
}). The three super\/-\/systems share the
second Hamiltonian operator $J_2^{a{=}-2,1,4}$, whereas the first Hamiltonian structure $J_1^{a{=}1}$ for the case $a{=}1$ is highly non\/-\/local. Kersten and Sorin obtained it in~\cite{KerstenSorin2002} by factorising the
recursion super\/-\/operator $R=J_2^{a{=}-2,1,4}\circ\bigl(J_1^{a=1}\bigr)^{-1}$, thus solving P.~Mathieu's Open problem~5
from~\cite{MathieuOpen} at least formally. (It still remains to verify that the non\/-\/local super\/-\/operator~$J_1^{a{=}1}$ is skew\/-\/adjoint and endows the tower of super\/-\/Hamiltonians with the Poisson bracket; it remains
also to inspect whether the symmetries of the $a{=}1$ SKdV equation produced by such mapping~$R$ remain local.) It is clear however that the bosonic\/-\/limit system~\eqref{eqKK} inherits the bi\/-\/Hamiltonian structure 
$\left( J_1^{a{=}1}{\bigr|}_{\text{Fermi}:=0}, J_2^{a{=}-2,1,4}{\bigr|}_{\text{Fermi}:=0} \right)$ from the parent super\/-\/equation.

To the best of our knowledge, the study of standard geometric structures related to~\eqref{eqKK} was initiated by Krasil'shchik and Kersten
in~\cite{JKKerstenEq} and continued in~\cite{JKLstarCov}, where the proper reduction of the second Hamiltonian
structure for P.~Mathieu's $N{=}2$ SKdV super\/-\/equation was re\/-\/discovered; a recursion operator for symmetries
of~\eqref{eqKK} was obtained via the introduction of suitable nonlocalities,
c.f.~\cite{Sakovich2004} in this context. 
(Let us note that the coefficients of that recursion operator depend on the new nonlocal variables so that the locality of such
operator's output is arguable; yet it could well be that system~\eqref{eqKK} is but a precursor to the larger
model with physical applications.) Around the same time, 
Karasu\/-\/Kalkanl\i\ \textit{et al.}~\cite{KarasuSakovichYurduen2003} approached system~\eqref{eqKK} with the Painlev\'e test, performing
the singularity analysis, and constructed an $\gsl_3(\BBC)$-\/valued zero\/-\/curvature
representation~$\alpha^{\text{KK}}_1$ for~\eqref{eqKK}. We shall use this Lie algebra\/-\/valued one\/-\/form for
solving the Gardner's deformation problem of recursive production of Hamiltonians for the hierarchy of the
Krasil'shchik\/--\/Kersten system. 

Our approach is based on the earlier work~\cite{JMP2012,GradNonRem}, see
also~\cite{Fordy83,KuperIrish,RoelofsThesis} and~\cite{TMPh2006}. 
By understanding that zero\/-\/curvature
representations and Gardner's deformations are structures of the same nature within the nonlocal geometry of
PDE, we reformulate the Gardner deformation problem for Krasil'shchik\/--\/Kersten's system in terms of
construction of parameter\/-\/dependent families of new `nonlocal' variables. We require that these nonlocalities
reproduce the classical Gardner deformation from~\cite{Miura68} under the shrinking of extension~\eqref{eqKK}
for the Korteweg\/--\/de Vries equation back to
\begin{equation}\label{kdv}
u_{12;t} = - u_{12;xxx} - 6 u_{12}u_{12;x} \quad\Longleftrightarrow\quad
\underline{u}{}_{12;t} = - \underline{u}{}_{12;xxx} + 6 \underline{u}{}_{12}\underline{u}{}_{12;x}\quad \text{for $\underline{u}{}_{12}=-u_{12}$.}
\end{equation}
We discover that the nonlocalities which encode the gauge class of 
parameter\/-\/dependent zero\/-\/curvature
representation from~\cite{KarasuSakovichYurduen2003} are a key to solution of the problem: they yield the
recurrence relation between the hierarchy of integrals of motion for the Krasil'shchik\/--\/Kersten system.

\section{Basic concept}
Let us 
recall the 
definitions (see~\cite{BVV,Olver,GDE2012} and~\cite{Marvan2002} for detail);
this material is standard so we now fix some notation and review the concept.

\subsection{The geometry of infinite jet space $J^\infty(\pi)$}
Let $M^n$~be a smooth real $n$-\/dimensional orientable
manifold. Consider a smooth 
vector bundle
$\pi\colon E^{n+m} \to M^n$ with $m$-\/dimensional fibres; 
let us construct the space~$J^{\infty}(\pi)$ of infinite jets of sections for~$\pi$. 
Let $\mu_{\bx_0}^k \Gamma(\pi)$ be the space of (local) sections $s\in\Gamma(\pi)$ such that all partial derivatives of~$s$ up to and including order~$k\geqslant0$ vanish at a point~$\bx_0\in M^n$:
\[
\mu_{\bx_0}^k \Gamma(\pi)=\Bigl\{
s\in\Gamma(\pi)\ \Bigr|\ %
\frac{\partial^{|\sigma|}s}{\partial\bx^\sigma}(\bx_0)=0\qquad
\text{for all $\sigma$ such that } 0\leqslant|\sigma|\leqslant k
\Bigr\};
\]
by convention, the zeroth\/-\/order derivative of any function is the function itself. The rules for
transformation of first\/-{} and higher\/-\/order derivatives under local reparametrisations of the independent variables $\bx=(x^1,\ldots,x^n)$ in a chart $U\subseteq M^n$ imply that the space $\mu_{\bx_0}^k \Gamma(\pi)$~is well defined~--- in a coordinate\/-\/free way.
Consider the quotient space of equivalence classes of (local) sections near a point~$\bx_0$,
\[
J^k_{\bx_0}(\pi) = \Gamma(\pi) / \mu^k_{\bx_0} \Gamma(\pi).
\]
The 
space~$J^k(\pi)$ of $k$th~jets of sections for the vector bundle~$\pi$ is the union
\[
  J^k(\pi) = \bigcup_{\bx_0\in M^n} J^k_{\bx_0}(\pi),
\]
naturally equipped with the smooth manifold structure.
The \emph{infinite jet space}~$J^\infty(\pi)$ is the projective limit
\[
J^{\infty}(\pi) = \mathop{\lim_{\longleftarrow}}_{k\to+\infty} J^k(\pi).
\]
A convenient organisation of local coordinates on~$J^\infty(\pi)$ is as follows: 
let $x^i$~be some coordinate system on a chart in the base~$M^n$ and 
denote by~$u^j$ the fibre coordinates 
in the bundle~$\pi$ so that the variables~$u^j$ play the r\^o\-le of unknowns;
one obtains the collection $u^j_\sigma$ of jet variables along fibres of the vector bundle $J^{\infty}(\pi)\to M^n$ 
(here $|\sigma|\geqslant0$ and~$u^j_\varnothing\equiv u^j$).
In particular, we have that $n=2$, $m=1$, $x^1=x$,
$x^2=t$, $u^1=u_{12}$ for KdV equation~\eqref{kdv} 
and $n=2$, $m=2$, $x^1=x$, $x^2=t$, $u^1=\underline{u}{}_{12}$,
$u^2=u_0$ for Krasil'shchik\/--\/Kersten's system~\eqref{eqKK}.

Define the ring of smooth function on $J^{\infty}(\pi)$ as the inductive limit
\[
C^{\infty}(J^{\infty}(\pi)) =  \left\{ 
  f \colon J^{\infty}(\pi)\to \BBR \mid 
f\in C^\infty(M^n)\text{ or }
\exists k\in\BBN_{\geqslant0} \text{ such that } f\in C^{\infty}(J^k(\pi))
\right\}.
\]
For a function from~$C^\infty(J^\infty(\pi))$ we denote by~$[u]$ its differential dependence on finitely many coordinates along the fibre of the infinite jet bundle over~$M^n$: such differential order can be arbitrarily large but it is always finite.
In this setup, the \emph{total derivatives} $D_{x^i}$~are the commuting 
vector fields
\[
D_{x^i}  
=\frac{\dd}{\dd x^i} + \sum_{j=1}^m\sum_{|\sigma|\geqslant0}
u^j_{\sigma\cup\{i\}}\,\frac{\dd}{\dd u^j_\sigma}
\]
on~$J^\infty(\pi)$. Let us denote the total derivatives also by $\tfrac{\Id}{\Id x^i}$, making no distinction between the two ways of notation.

\label{PageFormalIntegrability}
Let us impose some mild restrictions on the class of~PDEs which we deal with and which are given in local coordinates by using the relations\footnote{%
It is very well known that the \emph{differential order} of a (system of partial) differential equation(s) can depend on a choice of the jet space in which the equation(s) is --\,or are\,-- realised by using the jet fibre coordinates. For example, the \emph{second}\/-\/order equation $\cE'=\{u_{xx}=0\}\subset J^2(\pi'\colon\BBR\times\BBR\to\BBR)$ is obviously equivalent to the \emph{first}\/-\/order system $\cE''=\{u_x=v$, $v_x=0\}\subset J^1(\pi''\colon
\BBR\times\BBR^2\to\BBR)$. Therefore, one can safely think that $K=1$ in~\eqref{EqEqnGeneric} whenever the surjectivity $\cE^\infty\to J^{K-1}(\pi)$ is required to outline the class of ``good'' differential equations (see below).
At the same time, the realisation of Krasil'shchik\/--\/Kersten's system
and Korteweg\/--\/de Vries' equation by using~\eqref{eqKK} and~\eqref{kdv} of differential order three also makes no harm.

Let us emphasize that in this paper, we operate with the \emph{covering} structures over PDEs~$\cE$ by viewing them not just as larger systems~$\tilde{\cE}$ of equations satisfying some properties (see Definition~\ref{DefCovering} on p.~\pageref{DefCovering} below) but as the structures indeed; the coverings are built over the underlying systems~$\cE$ that are given in advance. In other words, the formal integrability and other requirements which we describe on pp.~\pageref{PageFormalIntegrability}--\pageref{PageTangent}
refer to the systems~$\cE$ and their prolongations (specifically, to system~\eqref{eqKK} and equation~\eqref{kdv}). But the key idea of the reasoning that follows is the use of several algebraic realisations for the covering structures over~$\cE$. By this we avoid a necessity to re\/-\/write --\,in terms of a larger system of differential order one\,-- the covering equation~$\tilde{\cE}$ formed by third\/-\/order 
Korteweg\/--\/de Vries' equation~\eqref{kdv} and by rules~\eqref{mcov} to differentiate the nonlocality (here, of order one for~\eqref{mcov1} and order two for~\eqref{mcov2}).
Leaving the details of that particular example to a curious reader, we thank the referee for addressing this issue.}
\begin{equation}\label{EqEqnGeneric}
\cE = \left\{ F^\ell (x^i, u^j, \dots, u^j_{\sigma_\ell}) = 0, \quad \ell =   1,\dots, r<\infty,\ 0<|\sigma_\ell|\leqslant K<\infty \right\}.
\end{equation}
Namely, we study only (systems of) partial differential equations\footnote{We shall 
primarily deal with the \emph{evolutionary} systems of KdV-\/type, equipping them
further with the geometric structures such as the nonlocalities, or \emph{coverings}
(see Definition~\ref{DefCovering} on p.~\pageref{DefCovering}).}
which are \emph{formally integrable} 
(cf.~\cite{BVV,Olver} and~\cite{Goldschmidt} by Goldschmidt);
this class of PDEs~$\cE$ is defined as follows. By definition, put $\cE^{(0)}=\cE$.
For every given PDE system~\eqref{EqEqnGeneric} consider its differential consequences
$\cE^{(k)}=\{D_\tau(F^\ell)=0$ $|$ $|\tau|\leqslant k>0\}$ for all admissible~$\ell$.
Let us assume that at each $k>0$, all these differential\/-\/algebraic relations
determine a smooth submanifold in~$J^{K+k}(\pi)$ such that the projection
$\cE^{(k)}\to\cE^{(k-1)}$ yields a fibre 
bundle. Suppose further that the 
inverse limit $\projlim_{k\to\infty}\cE^{(k)}=\cE^\infty$ is a smooth submanifold 
in~$J^\infty(\pi)$; the object~$\cE^\infty$ is called the 
\emph{infinite prolongation}\footnote{A geometric distinction between
the smooth submanifold $\cE^\infty\subseteq J^\infty(\pi)$ and its description by using the
smooth left\/-\/hand sides in the infinite system $D_\tau(F^\ell)=0$ is that the latter
is always defined yet it can describe the \emph{empty set}.
For instance, consider the overdetermined equation 
$\cE=\{u_{xx}=1$,\ $u_y=x^2\}$ for which $(u_{xx})_y=0\neq 2=(u_y)_{xx}$.
Admitting the slightest abuse of language, we shall make no distinction between 
the geometric objects~$\cE^\infty$ and their algebraic descriptions.}
of the underlying PDE system~$\cE\subseteq J^K(\pi)$. Without loss of generality,
let us assume that the mapping $\cE^\infty\to M^n$ under $\pi_{\infty,-\infty}\colon
J^\infty(\pi)\to M^n$ is onto (otherwise, shrink the base manifold~$M^n$); assume also that
the projection $\cE^\infty\to J^{K-1}(\pi)$ under $\pi_{\infty,K-1}\colon J^\infty(\pi)
\to J^{K-1}(\pi)$ is a surjection as well.\footnote{
For example, the equation $\cE=\{v_x=u$,\ $v_y=u\}$ can be solved only if the compatibility
condition $v_{xy}=v_{yx}$ is satisfied, thus $u_x=u_y$ is the constraint due to which the
projection from $\cE^{(1)}$ down to~$\cE$ is not onto, hence not a vector bundle%
.}
By imposing some geometric conditions on the symbols of such PDEs,
Goldschmidt proves in~\cite[Theorem~9.1]{Goldschmidt} that formally integrable PDE systems
with (locally-) analytic left\/-\/hand sides do possess (local-) analytic solutions
for all Cauchy's data.\footnote{The books~\cite{BVV,Olver} contain an extensive study
of the properties which partial differential equations --\,not only evolutionary\,-- 
must have in order to admit formal solutions and possess infinitesimal symmetries.} 



For \emph{evolutionary} PDE systems
(e.g., for equation~\eqref{kdv} and system~\eqref{eqKK})
the spatial derivatives $u^i_{;\varnothing}\equiv 
u^i$,\ $u^i_{;\bx}$,\ $u^i_{;\bx\bx}$,\ $\ldots$ of all orders together with 
the independent variable(s)~$\bx$ and time~$t$ constitute the collection 
of convenient internal coordinates on the set~$\cE^\infty$, 
which is readily seen to be a 
smooth submanifold in~$J^\infty(\pi)$.
Let us denote by~$\bar{D}_{x^i}$ the restrictions of total derivatives~$D_{x^i}$ to the infinite prolongation~$\cEinf$.
\label{PageTangent}
Thanks to the assumptions which were made in the 
preceding paragraph, these vector fields 
are tangent to the nonsingular submanifold~$\cEinf\subset J^\infty(\pi)$, spanning the Cartan distribution~$\cC\subset T\cE^\infty$ on~it.
At every point $\theta^\infty\in\cEinf$ the tangent 
space~$T_{\theta^\infty}\cEinf$ splits in a direct sum of two subspaces. 
The one which is spanned by the Cartan distribution on~$\cEinf$ is \emph{horizontal} 
and the other is \emph{vertical}, forming the kernel of the differential of the projection
$\cE^\infty\to M^n$; we have that $T_{\theta^\infty} \cEinf = \cC_{\theta^\infty} \oplus
V_{\theta^\infty} \cEinf$. 
We denote by~$\Lambda^{1,0}(\cEinf) = \Ann \cC$ 
and~$\Lambda^{0,1} (\cEinf) = \Ann V\cEinf$ 
the $C^\infty(\cEinf)$-\/modules of contact and horizontal
one\/-\/forms which vanish on~$\cC$ and~$V\cEinf$, respectively.
Denote further by~$\Lambda^r(\cEinf)$ the 
$C^{\infty}(\cEinf)$-\/module of $r$-forms on~$\cEinf$.
There is a natural decomposition $\Lambda^r(\cEinf) = \bigoplus_{q+p = r}
\Lambda^{p,q}(\cEinf)$, where ${\Lambda^{p,q} (\cEinf) = \bigwedge^p \Lambda^{1,0}(\cEinf) \wedge \bigwedge^q \Lambda^{0,1}(\cEinf)}$. This implies that the de Rham differential~$\bar{\Id}$ on~$\cEinf$ is subjected to the decomposition $\bar{\Id} = \bar{\Id}_h + \bar{\Id}_{\cC}$, where $\bar{\Id}_h \colon \Lambda^{p,q}(\cEinf) \to \Lambda^{p,q+1}(\cEinf)$ is the horizontal differential and $\bar{\Id}_{\cC} \colon \Lambda^{p,q}(\cEinf) \to \Lambda^{p+1,q}(\cEinf)$ is the vertical differential.
Let $f\bigl(x^j,[u]\bigr)$ be a function (of finite differential order) on the infinite prolongation~$\cE^\infty$. The horizontal differential $\bar{\Id}_h$ acts on it by the rule $f\mapsto \sum\nolimits_{i=1}^n {\bar{D}}_{x^i}(f)\,\Id x^i$. This formula's extension to the spaces $\Lambda^{0,q}(\cE^\infty)$ of horizontal $q$-\/forms is immediate: for any $\eta=f\bigl(x^j,[u]\bigr)\,\Id x^{i_1}\wedge\ldots\wedge\Id x^{i_q}$ we have that $\bar{\Id}_h(\eta)=\sum\nolimits_{i=1}^n {\bar{D}}_{x^i}(f)\,\Id x^i\wedge\Id x^{i_1}\wedge\ldots\wedge\Id x^{i_q}\in\Lambda^{0,q+1}(\cE^\infty)$.\label{DefDH}
For $p>0$, the action of horizontal differential~$\bar{\Id}_h$ on the space $\Lambda^{p,q}(\cE^\infty)$ of differential forms containing $p$~Cartan's differentials is highly nontrivial%
\footnote{
The horizontal differential $\bar{\Id}_h$ acts on the spaces $\Lambda^{p,q}(\cE^\infty)$ of differential forms via the graded Leibniz rule; its application to Cartan's forms $\bar{\Id}_{\cC}(u^j_\sigma)$ is deduced from the identity~$\bar{\Id}^2=0$ for the de Rham differential $\bar{\Id}=\bar{\Id}_h+\bar{\Id}_{\cC}$ on~$\cE^\infty$. Specifically, from $\bar{\Id}_h^2=\bar{\Id}_h\circ\bar{\Id}_{\cC}+\bar{\Id}_{\cC}\circ\bar{\Id}_h=\bar{\Id}_{\cC}^2=0$ one infers that 
$\bar{\Id}_h\circ\bar{\Id}_{\cC}=-\bar{\Id}_{\cC}\circ\bar{\Id}_h$, thus reducing the action of~$\bar{\Id}_h$ to the case when it has already been defined. In brief, the formula
$\bar{\Id}_h = \sum\nolimits_i \Id x^i \wedge {\bar{D}}_{x^i}$
means that the vector fields $\bar{D}_{x^i}$ proceed by the Leibniz rule over the argument's wedge factors, acting on each factor --\,pushed leftmost\,-- via the Lie derivative.} 
(see~\cite{VinogradovCSpecI, VinogradovCSpecII}, also~\cite{JKVerbTopical}).
However, in this paper
we deal with the forms that contain no Cartan's differentials.
By definition, we put $\bar{\Lambda}(\cE^\infty)=\bigoplus_{q\geqslant0}\Lambda^{0,q}
(\cE^\infty)$ and we denote by $\overline{H}^n(\cE^\infty)$ the senior $\bar{\Id}_h$-\/cohomology
group (also called senior \emph{horizontal cohomology}). 


A conserved current $\eta$ for the system $\cE$ is the continuity equation
\[
  \sum_{i=1}^{n} \bar{D}_{x^i} (\eta_i) \doteq 0 \ \text{on}\ \cEinf,
\]
where the symbol~$\doteq$ denotes the equality by virtue of the system~$\cE$ 
and its differential con\-se\-quen\-ces.
The quantities~$\eta_i(x^j, [u^k])$ 
convene to the $\bar{\Id}_h$-\/closed horizontal $(n-1)$-\/form
\[
\eta = \sum_{i=1}^{n}(-1)^{i+1} \eta_i \cdot \Id x^1 \wedge \ldots \wedge \widehat{\Id x^i} \wedge \ldots
\wedge \Id x^n \in \bar{\Lambda}^{n-1}(\pi),
\]
in which the wedge factors~$\widehat{\Id x^i}$ are omitted;
the conservation of the current~$\eta$ is the equality
$\bar{\Id}_h
(\eta) \doteq 0$ on~$\cEinf$.
The coefficient $\eta^n$ is called the \emph{conserved density} and coefficients $\eta^1$,\ $\dots$,\ $\eta^{n-1}$ are the \emph{flux} components.
By definition, a current~$\eta$ is \emph{trivial} if it is $\bar{\Id}_h$-\/exact: $\eta=\bar{\Id}_h(\xi)$ for some $(n-2)$-\/form~$\xi$ on~$\cE^\infty$ (here $n\geqslant2$).
A \emph{conservation law} $\int\eta\in\overline{H}^{n-1}(\cE^\infty
)$ for an equation $\cE$ is the equivalence class of conserved
currents $\eta$ taken modulo globally defined\footnote{%
Let us exclude --\,from the future consideration\,-- the \emph{topological} conservation laws (which arise from the geometry of the bundle~$\pi$ or from the topology $H^{n-1}(\cE^\infty)\neq0$ of the PDE system at hand, cf.~\cite{BVV} or~\cite{GDE2012}).
This is legitimate for the geometry of Korteweg\/--\/de Vries' equation and Krasil'shchik\/--\/Kersten's system under study; we have that the vector bundles $\pi\colon\BBR^n\times\BBR^m\to\BBR^n\mathrel{{=}{:}}M^n$ are topologically trivial in the both cases.

The referee recalls that the requirement for a trivial current~$\bar{\Id}_h(\xi)$ to be globally defined does not exclude a possibility for existence of the topological conservation laws. For example, let the $n$-\/dimensional base manifold~$M^n$ be such that its $(n-1)$\/th de Rham cohomology group is nonzero; now pick any closed differential $(n-1)$-\/form $\omega$ on~$M^n$ such that the de Rham cohomology class of~$\omega$ is nonzero. Next, construct the trivial bundle $\pi\colon M^n\times\BBR\to M^n$ with a coordinate~$u$ in the fibre, and postulate the PDE $\cE=\{u=0\}$ so that the section $u=0$ is its only solution. Then $\cE^\infty\cong M^n$, and one can regard $\omega$ as a horizontal $(n-1)$-\/form on~$\cE^\infty$. Note that, specifically to this example, the horizontal differential~$\bar{\Id}_h$ on~$\cE^\infty\cong M^n$ is equal to the usual de Rham differential, so that~$\bar{\Id}_h(\omega)=0$. Since the de Rham cohomology class of~$\omega$ is nonzero, there does not exist a globally defined form~$\xi$ on~$M^n\cong\cE^\infty$ satisfying~$\bar{\Id}_h(\xi)=\omega$. (Let us note that the globally defined topological conserved current~$\omega$ on~$\cE^\infty$ is \emph{not trivial} by the definition of cohomology.) The referee concludes that the conservation law $\int\omega\in\overline{H}^{n-1}(\cE^\infty)$ is topological because it depends only on the topology of~$\cE^\infty\cong M^n$. However, all of this is irrelevant to the problem that we deal with in this paper.%
}
exact forms $\bar{\Id}_h \xi \in \int 0$. In other words, two 
conserved currents $\eta_1$ and $\eta_2$ are equivalent if they differ by 
a trivial current: 
$\eta_1 - \eta_2 = \bar{\Id}_h \xi$. 
We denote by $\overline{H}^{n-1}(\cE^\infty)$ the $(n-1)$\/th horizontal cohomology group for~$\cE^\infty$, that is, the set of equivalence classes of conserved currents which is equipped with the structure of Abelian group.



\subsection{Gardner's deformations}

\begin{definition}[\cite{Miura68,KuperIrish,TMPh2006}]\label{DefGardner}
Let $\cE = \left\{ u_t = f(x, [u]) \right\}$
be a system of evolution equations (in particular, a completely integrable system).
Suppose $\cE(\veps) = \{ \tu_t = f_\veps(x, [\tu], \veps)  \mid f_\veps \in \img \tfrac{\Id}{\Id x} \}$
is a deformation of $\cE$ such that at each point $\veps \in
\mathcal{I}$ of an interval $\mathcal{I}\subseteq  \BBR$ 
there is the \emph{Miura contraction}
$\mathfrak{m}_\veps = \{ u = u([\tu], \veps)\} \colon
\cE(\veps) \to \cE$. 
Then the pair $(\cE(\veps), \mathfrak{m}_\veps)$
is the (\emph{classical}) \emph{Gardner deformation} for the system~$\cE$.
\end{definition}


Under the assumption that $\cE(\veps)$ be in the form
of a 
conserved current, the Taylor
coefficients $\tilde{u}^{(k)}$ of the formal power series
$\tilde{u}=\mathop{\sum_{k=0}^{+\infty}}\tilde{u}^{(k)}\cdot\veps^k$ are
termwise conserved on~$\cE(\veps)$ and hence on~$\cE$. Therefore,
the contraction~$\gm_\veps$ yields the recurrence relations, ordered
by the powers of~$\veps$, between these densities~$\tilde{u}^{(k)}$,
while the equality $\cE(0)=\cE$ specifies the initial condition for those relations.

\begin{example}[\textup{\textmd{\cite{Gardner}}}]\label{ExKdVe}
The contraction
\begin{subequations}\label{DefKdV}
\begin{align}
\gm_\veps&=\smash{\bigl\{u_{12}=\tu_{12}
   \pm\veps\tu_{12;x}-\veps^2\tu_{12}^2\bigr\}}\label{KdVeKdV}\\
\intertext{maps solutions $\tu_{12}(x,t;\veps)$ of
the extended equation}
\cE(\veps)=\bigl\{
  \tu_{12;t}&+
\bigl(\tu_{12;xx}+3\tu_{12}^2
   -2\veps^2\cdot\tu_{12}^3\bigr)_x=0\bigr\},\label{KdVe}
\end{align}
\end{subequations}
to solutions $u_{12}(x,t)$ of the Korteweg\/--\/de Vries equation
\begin{equation}
\cE=\bigl\{ u_{12;t} = - u_{12;xxx} - 6 u_{12}u_{12;x} \bigr\}.\tag{\ref{kdv}}
\end{equation}
Plugging the 
series $\tu_{12}=\sum_{k=0}^{+\infty}u_{12}^{(k)}\cdot\veps^k$
into 
expression~\eqref{KdVeKdV} 
for~$\tu_{12}$, we obtain the chain of equations ordered by 
powers of~$\veps$,
\[
u_{12}=\sum_{k=0}^{+\infty}\tu_{12}^{(k)}\cdot\veps^k 
  \pm \tu_{12;x}^{(k)}\cdot\veps^{k+1}
   -\sum_{\substack{i+j=k\\ i,j\geq0}} \tu_{12}^{(i)}\tu_{12}^{(j)}\cdot
    \veps^{k+2}.
\]
Let us fix the plus sign in~\eqref{KdVeKdV} by reversing $\veps\to-\veps$ if necessary.
Equating the co\-ef\-fi\-ci\-ents
of
~$\veps^k$, we obtain the relations
\[
u=\tu_{12}^{(0)},\qquad
0=\tu_{12}^{(1)}+\tu_{12;x}^{(0)},\qquad
0=\tu_{12}^{(k)}+\tu_{12;x}^{(k-1)}-\sum_{\substack{i+j=k-2\\ i,j\geq0}}
   \tu_{12}^{(i)}\tu_{12}^{(j)},\quad k\geq2.
\]
Hence, from the initial condition $\tu_{12}^{(0)}= u_{12}$
we recursively generate the densities
\begin{align*}
\tu_{12}^{(1)} &= - u_{12;x},\qquad
\tu_{12}^{(2)} = u_{12;xx} - u_{12}^2,\qquad 
\tu_{12}^{(3)} = - u_{12;xxx} + 4u_{12;x}u_{12},\\
\tu_{12}^{(4)} &= u_{12;4x} - 6u_{12;xx}u_{12} - 5u_{12;x}^2 + 2u_{12}^3,\\
\tu_{12}^{(5)} &= - u_{12;5x} + 8u_{12;xxx}u_{12} + 18u_{12;xx}u_{12;x} - 16u_{12;x}u_{12}^2,\\
\tu_{12}^{(6)} &= u_{12;6x} - 10u_{12;4x}u_{12} - 28u_{12;xxx}u_{12;x} - 19u_{12;xx}^2 + 30u_{12;xx}u_{12}^2 + 50u_{12;x}^2u_{12} - 5u_{12}^4,\\
\tu_{12}^{(7)} &= - u_{12;7x} + 12u_{12;5x}u_{12} + 40u_{12;4x}u_{12;x} + 68u_{12;xxx}u_{12;xx} - 48u_{12;xxx}u_{12}^2\\
 {}&{}\qquad{} - 216u_{12;xx}u_{12;x}u_{12} - 60u_{12;x}^3 + 64u_{12;x}u_{12}^3,\quad \text{etc}.
\end{align*}
The conservation $\tu_{12;t}=\tfrac{\Id}{\Id x}\bigl(\cdot\bigr)$ implies
that each coefficient~$\tu_{12}^{(k)}$ is conserved on~\eqref{kdv},
and one proves easily that the densities~$\tu_{12}^{(2k)}$ with even 
indexes~$2k\in 2\BBN_{\geqslant0}$ determine the hierarchy of nontrivial conservation 
laws for the Korteweg\/--\/de Vries equation (\cite{Miura68,Gardner} vs~\cite{Magri}).
\end{example}

\subsection{Zero\/-\/curvature representations}
Let $G$~be a finite\/-\/dimensional matrix complex Lie group and $\fg$~be its Lie algebra.
Consider the tensor product $\fg\mathbin{{\otimes}_{\mathbb{R}}}\bar{\Lambda}(\cEinf)$
of~$\fg$ 
with 
the exterior algebra
$\bar{\Lambda}(\cEinf)=\bigoplus_i \Lambda^{0,i}(\cEinf)$.
The product is endowed with the bracket 
\[ 
 [ A\otimes\mu, B\otimes\nu ] =  [ A,B ]\otimes\mu\wedge\nu
\] 
for $A,B\in \fg$ and $\mu,\nu\in\bar{\Lambda}(\cEinf)$.
Define the operator $\bar{\Id}_h$ that acts on elements of $\fg\otimes\bar{\Lambda}(\cEinf)$ by the rule 
\[
\bar{\Id}_h(A\otimes\mu) = A\otimes\bar{\Id}_h\mu,
\]
where the horizontal differential~$\bar{\Id}_h$ in the right\/-\/hand side is~
already defined (see p.~\pageref{DefDH}).
Elements of $\fg\otimes C^\infty(\cEinf)$ are called
\emph{$\fg$-\/matrices}~\cite{Marvan2002}.

\begin{definition}[\cite{Marvan2002,Marvan2010}]\label{DefZCR}
A horizontal $1$-\/form $\alpha\in\fg\otimes\bar{\Lambda}^1(\cEinf)$ is
called a $\fg$-\/valued \emph{zero\/-\/curvature representation} (ZCR) for
the equation~$\cE$ if the Maurer\/--\/Cartan condition,
\begin{equation}\label{zcrf}
\bar{\Id}_h\alpha \doteq \tfrac12 [ \alpha,\alpha ],
\end{equation}
holds by virtue of~$\cE$ and its differential consequences.
\end{definition}

\begin{example}\label{exLaxZCR}
The $\gsl_2(\BBC)$-\/valued zero\/-\/curvature representation
$\alpha_1^{\text{KdV}} = A\,\Id x + B\,\Id t$
for KdV equation~\eqref{kdv},
\begin{equation}
\label{zcrBVV}
\alpha_1^{\text{KdV}} = \begin{pmatrix}
0 & \lambda - u_{12}\\
1 & 0
\end{pmatrix}\Id x
+
 \begin{pmatrix}
- u_{12;x} & - 4\lambda^2 + 2\lambda u_{12}  + 2u_{12}^2 +
  u_{12;xx}  \\
 - 4\lambda - 2u_{12}  &  u_{12;x}
\end{pmatrix} \Id t,
\end{equation}
is known from 
the 
paper~\cite{ZSh}.
\end{example}

Recall that $\fg$~is the Lie algebra of a given Lie group~$G$, see above.
Elements of $C^{\infty}(\cEinf, G)$, i.e., 
$G$-\/valued functions on~$\cEinf$, are called \emph{$G$-\/matrices}.
Let $\alpha$ and $\alpha^\prime$ be $\fg$-\/valued zero\/-\/curvature representations, then $\alpha$ and $\alpha^\prime$ are
called \emph{gauge\/-\/equivalent} if there exists a $G$-\/matrix 
$S\in C^{\infty}(\cEinf, G)$ such that 
\begin{equation}\label{gaugetrans}
\alpha^{\prime} = \bar{\Id}_hS\cdot S^{-1} + S\cdot\alpha\cdot S^{-1}
\mathrel{{=}{:}} \alpha^S .
\end{equation}
Not only that the notions of Gardner's deformations and zero\/-\/curvature representations are intimately related but moreover, the idea of gauge equivalence allows us to revise, simplify, and solve the deformation problem 
for~\eqref{eqKK}.

\subsection{Differential coverings}
We now recall the definition of differential covering over (the infinite prolongation of)
a given PDE system~$\cE$. This notion brings together the procedure of Gardner's 
deformations and the construction of zero\/-\/curvature representations, allowing
--\,in principle\,-- to interpret both of them in terms of new, larger set of
differential equations that contains~$\cE$ as sub\/-\/system. (By construction, the new
differential equations will not necessarily be evolutionary even if the system~$\cE$ is.)
Let us remark however that in this text, the coverings over evolution equations are
viewed and operated with as differential\/-\/geometric structures over the underlying
systems~(\ref{eqKK}-\ref{kdv})
of KdV-\/type.\footnote{This approach is highlighted by using the notation:
nominally realised as differential equations, the coverings' total spaces~$\tilde{\cE}$ 
fibre over the infinite prolongations~$\cE^\infty$ of evolutionary systems.}
More specifically, the covering structures will be described by using matrix Lie groups 
and algebras and by using Lie subalgebras in the algebras of vector fields on some fibre
bundles over the manifolds~$\cE^\infty$.

\begin{definition}[\cite{BVV,JKAMNonlocalTrends}]\label{DefCovering}
A \emph{covering} (or \emph{differential covering}) over 
a formally integrable equation~$\cE$ is another (usually,
larger) system of partial differential equations $\tilde{\cE}$
endowed with the $n$-\/dimensional Cartan distribution $\tilde{\cC}$
and such that there is a mapping $\tau\colon\tilde{\cE}\to\cEinf$ for which at each point
$\theta$ of the manifold~$\tilde{\cE}$, the tangent map $\tau_{*,\theta}$ is an isomorphism of the plane
$\tilde{\cC}_{\theta}$ to the Cartan plane $\cC_{\tau(\theta)}$ at the point~$\tau(\theta)$ in $\cEinf$.
\end{definition}

The construction of a covering over~$\cE$ means the introduction of new variables in such a way that the compatibility of their mixed derivatives is valid by virtue of the underlying~$\cEinf$. 
In practice (see~\cite{GDE2012}), it is the rules to
differentiate the new variable(s) which are specified in a consistent
way; this implies that those new variables acquire the nature of
nonlocalities if their derivatives are local but the variables
themselves are not (e.g., consider the potential $\mathfrak{v}=\int
u_{12}\,\Id x$ satisfying $\mathfrak{v}_x = u_{12}$ 
and $\mathfrak{v}_t = -u_{12;xx} -
3u_{12}^2$ for the KdV equation $u_{12;t} + u_{12;xxx} + 6u_{12}u_{12;x} = 0$).
Whenever the covering is indeed realised as the fibre bundle
$\tau\colon\tilde{\cE}\to\cE$, the forgetful map~$\tau$ discards the
nonlocalities.

In these terms, zero\/-\/curvature representations and Gardner's
deformations are coverings of special kinds\footnote{The link 
between zero\/-\/curvature representations and recursion operators is discussed in the papers~\cite{Sakovich2004,BaranMarvan2008}.}
(see Examples~\ref{exCovAndZcr} and~\ref{exGardnerCov} below).
Each zero\/-\/curvature
representation with coefficients belonging to a matrix Lie algebra 
determines a (linear) covering, whereas each covering with 
fibre~$W$ can be regarded as a zero\/-\/curvature representation 
whose 
coefficients 
take values in the Lie algebra of vector fields on~$W$.
Indeed, let $x^1$,\ $\ldots$,\ $x^n$ be the independent variables in a given PDE and ${\bar{D}}_{x^i}$~be the
corresponding total derivative operators. 
Then the zero\/-\/curvature representations and coverings are described
by the same equation~\eqref{zcrf},
\[
[\bar{D}_{x^i}+A_i, \bar{D}_{x^j}+A_j]\doteq0, \qquad 
1\leqslant i<j\leqslant n. 
\]
In the case of zero\/-\/curvature representations, the coefficients $A_i$ and $A_j$ are functions on~$\cE^\infty$ taking values in a Lie algebra.  
In the case of coverings, the objects $A_i$ and $A_j$ are vertical vector fields on the
covering manifold.
This correspondence between zero\/-\/curvature representations and coverings very often allows one to transfer
results on ZCRs to results on coverings and \textit{vice versa},
see~\cite{GradNonRem,KKIgonin2003} and~\cite{
PopovychBoykoNesterenko2003,Shchepochkina2006} for detail.

The use of geometric similarity of the two notions allows us to construct 
new Gardner's deformations from known zero\/-\/curvature representations 
that take values in finite\/-\/dimensional complex Lie algebras.%
\footnote{Whenever the vector field realisation of a covering structure over~$\cE^\infty$
is given \emph{a priori}, the problem of reconstruction and recognition of a Lie
algebra that could determine that covering is nontrivial; e.g., take
the vector fields that encode a Gardner deformation for~$\cE$ in terms of a covering
over its infinite prolongation. Suppose for definition that the coefficients of these 
vector fields are polynomial in the nonlocal variables. Then, as soon as one starts taking
the fields' iterated commutators, either they close to a manifestly finite\/-\/dimensional
Lie subalgebra in the Lie algebra of 
vector fields on the covering's fibre --- or the degrees of such polynomials grow 
infinitely. In that situation, the polynomials of different degrees determine linearly
independent elements within a basis of the 
generated Lie subalgebra in the Lie
algebra of all vector fields on the fibres. 
For example, such is the case of Gardner's deformation 
for the Kaup\/--\/Boussinesq equation~\cite{HKKW}.
However, there still remain two possible options: either the Lie algebra realised by using all the commutator\/-\/generated vector fields is truly 
in\-fi\-ni\-te\/-\/di\-men\-si\-o\-nal
or there is a \emph{finite}-\/dimensional Lie algebra such that the vector fields on 
the covering's fibres provide its infinite\/-\/dimensional representation.}
%

\begin{example}[Zero\/-\/curvature representations as 
coverings]\label{exCovAndZcr}
Let $\fg\mathrel{{:}{=}}\mathfrak{sl}_2(\BBC)$ as in Example~\ref{exLaxZCR} 
and introduce the standard basis $e,h,f$ in~$\fg$ so that
\[
 [e,h] = -2e,\quad [e,f] = h,\quad [f,h] = 2f.
\]
Let us consider the matrix representation
\[
\rho:\ \mathfrak{sl}_2(\BBC)\to \{ A \in \Mat(2,2) | \tr A =0\}
\]
of~$\fg$ and, simultaneously, 
its representation $\varrho$ in the space of vector fields with
polynomial coefficients on the complex line with the coordinate~$w$:
\[
\begin{array}{rclrclrcl}
\rho(e)& = &\begin{pmatrix} 0 & 1 \\ 0 & 0 \end{pmatrix}, &
\rho(h)& = &\begin{pmatrix} 1 & \phantom{+}0 \\ 0 & -1 \end{pmatrix},&
\rho(f)& = &\begin{pmatrix} 0 & 0 \\ 1 & 0 \end{pmatrix},\\
\varrho(e)& = &1\cdot\partial/\partial w, \quad &
\varrho(h)& =&-2w\cdot\partial/\partial w, \quad &
\varrho(f) &=&-w^2\cdot\partial/\partial w.
\end{array}
\]
Let us decompose the matrices $A_i\in C^\infty(\cEinf)\otimes \fg$, occurring in the zero\/-\/curvature representation $\alpha = \sum_i A_i\,\Id x^i$,
with respect to the basis in the space~$\rho(\fg)$:
\[
A_i  = a_e^{(i)} \otimes \rho(e) + a_h^{(i)} \otimes \rho(h) + a_f^{(i)} \otimes \rho(f)\qquad \text{for
  $a^{(i)}_j \in C^{\infty}(\cEinf)$.}
\]
To construct the covering $\tilde{\cE}$ over $\cEinf$ with a new
fibre variable~$w$ 
(the `nonlocality'), we switch from the representation~$\rho$ to~$\varrho$. 
We thus obtain the map 
\begin{equation}\label{atensvf}
  A_i \mapsto V_{A_i}
\end{equation}
that takes the $\fg$-\/matrices~$A_i$ to the vector fields
\[
V_{A_i} =  a_e^{(i)} \otimes \varrho(e) + a_h^{(i)} \otimes \varrho(h) + a_f^{(i)} \otimes \varrho(f);
\]
the prolongations of total derivatives 
$D_{x^i}$ to~ $\tilde{\cE}$ are now defined by the formula
\begin{equation}\label{tilded}
\tilde{D}_{x^i} = D_{x^i} - V_{A_i}.
\end{equation}
The extended derivatives act on the nonlocal variable $w$ as follows,
\[
\tilde{D}_{x^i} w = \Id w \inner ( - V_{A_i}).
\]
We shall use this approach in the construction of the covering 
in Example~\ref{Ex1DoverKdV}, see below.
\end{example}

\begin{remark}
The commutativity ${\bigl[\tilde{D}_{x^i}, \tilde{D}_{x^j}\bigr] = 0}$
of 
prolonged total derivatives for all $i\neq j$
is equivalent to Maurer\/--\/Cartan's equation~\eqref{zcrf}.
Indeed, we have that
\begin{multline*}
0 = [\tilde{D}_{x^i}, \tilde{D}_{x^j}] = [D_{x^i} - V_{A_i}, D_{x^j} -
V_{A_j}] = [D_{x^i}, D_{x^j}]  - [D_{x^i}, V_{A_j}] - [V_{A_j},
D_{x^j}] + [V_{A_i}, V_{A_j}]= \\
{}= - V_{D_{x^i} A_i} + V_{D_{x^j} A_i} + V_{[A_i, A_j]} = V_{ D_{x^j} A_i - D_{x^i}
  A_j + [A_i, A_j]}\   \Leftrightarrow{} \  D_{x^j} A_i - D_{x^i}
  A_j + [A_i, A_j]=0.
\end{multline*}
This motivates our 
choice of the minus sign in~\eqref{tilded}.
\end{remark}

\begin{example}[One\/-\/dimensional covering over the KdV equation]\label{Ex1DoverKdV}
One obtains the covering over the KdV equation from the
zero\/-\/curvature representation $\alpha_1^{\text{KdV}}$ 
(see Example~\ref{exLaxZCR} on p.~\pageref{exLaxZCR})
by using 
the realisation of Lie algebra~$\mathfrak{sl}_2(\BBC)$
in the space of vector fields.
Applying~\eqref{atensvf}, 
we construct the following vector fields with the nonlocal variable $w$:
\begin{align*}
V_A & = - (u_{12} + w^2 - \lambda)\,\tfrac{\dd}{\dd w},\\
V_B & = - \bigl( -u_{12;xx} - 2u_{12}^2 - 2\lambda u_{12} + 4\lambda^2 - 2u_{12;x}w - (2u_{12} + 4\lambda)w^2\bigr)\,\tfrac{\dd}{\dd w}.
\end{align*}
The prolongations of total derivatives act on the nonlocality~$w$ by the rules
\begin{subequations}\label{mcov}
\begin{align}
  w_x & = u_{12} + w^2 - \lambda,\label{mcov1}\\
  w_t &= -u_{12;xx} - 2u_{12}^2 - 2\lambda u_{12} + 4\lambda^2 -  2u_{12;x}w - (2u_{12} + 4\lambda)w^2.\label{mcov2}
\end{align}
\end{subequations}
This yields 
the one\/-\/dimensional covering over 
KdV equation~\eqref{kdv}.
\end{example}



\begin{example}[The projective substitution and nonlinear realisations 
of Lie algebras in the spaces of vector fields~\cite{RoelofsThesis,%
PopovychBoykoNesterenko2003,Shchepochkina2006}
]\label{exProjSub}%
Let $N$ be a $(k_0+1
)$-\/dimensional 
manifold.\footnote{\label{FootLieSuperLie}%
The realisation scheme which we outline here can be translated verbatim into 
supergeometry of su\-per\-ma\-ni\-folds~$N$ of superdimension~$(k_0+1|k_1)$ and 
zero\/-\/curvature representations with values in Lie 
su\-per\-al\-geb\-ras~$\mathfrak{g}\subseteq\mathfrak{gl}(k_0+1|k_1)$.} 
Because the reasoning is local, consider a chart~$\mathcal{U}\subseteq N$ equipped with
a $(k_0+1)$-\/tuple of rectifying coordinates 
$\boldsymbol{v} = (v^0$,\ $\dots$,\ $v^{k_0} 
)$ 
that establish a one\/-\/to\/-\/one correspondence between the points of~$\mathcal{U}$
in~$N$ and a domain in the vector space~$\BBR^{k_0+1}$. By definition, put
\[
\partial_{\boldsymbol{v}} = (\partial_{v^0}, 
\dots, \partial_{v^{k_0}} 
)^{\mathrm{t}}.  
\]
Second, let $\fg\subseteq\gl(k_0+1
)$ be a Lie algebra (see footnote~\ref{FootLieSuperLie} again). 
Take any matrix $
{g}\in\fg$
and represent it 
in the space of \emph{linear} vector fields on the domain 
in~$\BBR^{k_0+1}$ by using the formula
\[
{g}\longmapsto V_{
{g}} = \boldsymbol{v}g\partial_{\boldsymbol{v}}.
\]
By construction, the linear vector field representation $
{g}\mapsto V_{
{g}}$ of the matrix Lie
algebra~$\fg$ preserves all the 
commutation relations in it, 
\[
[
V_{
{g}}, V_{
{h}} ]
= [
\boldsymbol{v}g\partial_{\boldsymbol{v}},
\boldsymbol{v}h\partial_{\boldsymbol{v}} ]
= \boldsymbol{v}[
g,h ]
\partial_{\boldsymbol{v}} =
 V_{[
{g}, 
{h} ]
}, \qquad 
{h}, 
{g}\in\fg.
\]
The problem we are solving now is the realisation of matrix Lie algebra~$\fg$ by using vector fields with (non)\/linear coefficients.
To begin with, fix a nonzero constant $\mu\in\BBR$;
without loss of generality suppose $v^0\neq0$.
%
Consider the locally defined mapping $p\colon\BBR^{k_0+1}\to\BBR^{k_0}$ that takes every point $\boldsymbol{v}=(v^0$,\ $\ldots$,\ $v^{k_0})$
from the domain at hand 
to the point 
$(w^1,\ \dots,\ w^{k_0} 
)\in\BBR^{k_0}$, where
\[
w^i = \frac{\mu v^i}{v^0},\quad 1\leqslant i \leqslant k_0.
\]
The differential $\Id p_{\bv} \colon T_{\bv} \BBR^{k_0+1}\to T_{p(\bv)}\BBR^{k_0}$ at the point $\bv\in\BBR^{k_0+1}$ acts on the basic vectors 
from the $(k_0+1)$-\/tuple~$\dd_{\boldsymbol{v}}$ as follows,
\begin{align*}
\Id p_\bv\left( \frac{\dd}{\dd v^0}\right)
  &= \sum_{j=1}^{k_0} \frac{\dd w^j}{\dd v^0}\,\frac{\dd}{\dd w^j} 
  = \sum_{j=1}^{k_0}-\frac{\mu v^j}{(v^0)^2}\,\frac{\dd}{\dd w^j}, \\
\Id p_\bv\left( \frac{\dd}{\dd v^i}\right)
  &= \sum_{j=1}^{k_0}\frac{\dd w^j}{\dd v^i}\,\frac{\dd}{\dd w^j} = \frac{\mu}{v^0}\,\frac{\dd}{\dd w^i},
  \quad 1\leqslant i \leqslant k_0.
\end{align*}
Using these formulae,
let us calculate the action of differential $\Id p_{\bv}$ on the linear vector field $V_g$ at the point~$\bv$ in the domain:
\begin{align*}
  \Id p_\bv ( V_g) = {}&{} \Id p_\bv \left( \sum_{i,j=0}^{k_0}v^i g_{ij} \dd_{\bv}^j \right)  
  =  \sum_{i=0}^{k_0} v^i g_{i0} \sum_{j=1}^{k_0} \left(-\frac{\mu v^j}{(v^0)^2}\frac{\dd}{\dd w^j} \right) 
  + \sum_{i=0}^{k_0}\sum_{j=1}^{k_0} v^i g_{ij} \frac{\mu}{v^0}\frac{\dd}{\dd w^j} \\
{}={}&{} \sum_{i=0}^{k_0} \frac{\mu v^i}{v^0} g_{i0} \left(-\frac{1}{\mu} \sum_{j=1}^{k_0} \frac{\mu v^j}{v^0}
    \frac{\dd}{\dd w^j}\right)
  + \sum_{i=0}^{k_0}\sum_{j=1}^{k_0} \frac{\mu v^i}{v^0} g_{ij} \frac{\dd}{\dd w^j} \\
{}={}&{} \mu g_{00} \left(-\frac{1}{\mu}\sum_{j=1}^{k_0} w^j\frac{\dd}{\dd w^j} \right) 
  + \sum_{i=1}^{k_0} w^i g_{i0} \left(-\frac{1}{\mu}\sum_{j=1}^{k_0} w^j\frac{\dd}{\dd w^j} \right)  
  + \sum_{j=1}^{k_0} \mu g_{0j}\frac{\dd}{\dd w^j} \\
{}&{}
  + \sum_{i=1}^{k_0}\sum_{j=1}^{k_0} w^i g_{ij} \frac{\dd}{\dd w^j},
\end{align*}
where 
$\dd_{\bv}^j$ is the $j$th element of the tuple~$\dd_{\bv}$.
By definition, put
\[
\bw = (\mu, w^1, \ldots, w^{k_0}), \qquad 
\dd_{\bw} = \left( -\frac{1}{\mu}\sum_{j=1}^{k_0} w^j \frac{\dd}{\dd w^j}, \frac{\dd}{\dd w^1}, \ldots,
  \frac{\dd}{\dd w^{k_0}} \right)^{\mathrm{t}}.
\]
We conclude that 
the vector field $X_{
{g}}=\mathrm{d}p\,(V_{
{g}})$
is expressed by the formula 
\begin{equation}
\label{wroelofs}
X_{
{g}} = \boldsymbol{w} g \partial_{\boldsymbol{w}}.
\end{equation}
Generally speaking, the vector field~$X_{
{g}}$ on the respective subset of 
the target space $\BBR^{k_0}$ is nonlinear with respect to the variables~$w^0$,\ $\ldots$,\ %
$w^{k_0}$. 
Nevertheless, the commutation relations between vector fields of such type
are inherited 
from the relations in Lie algebra~$\fg\ni 
{g},
{h}$:
\[
[X_{
{g}}, X_{
{h}} ]
= [
\mathrm{d}p\,(V_{
{g}}),\mathrm{d}p\,(V_{
{h}}) ]
 \stackrel{(*)}{=} \mathrm{d}p\,([
V_{
{g}}, V_{
{h}} ]
) = \mathrm{d}p\,(V_{[
{g}, 
{h} ]
}) =
  X_{[
{g}, 
{h} ]
},
\]
see~\ref{appXcom} for an explicit proof of equality~$(*)$.
Take $X_
{g}$ for the representation $\varrho(
{g})$ of elements~$
{g}$ 
of Lie 
algebra~$\fg$; now that the representation~$\varrho$ is specified,
either leave the 
parameter~$\mu$ free or set it equal to any convenient nonzero constant.
We refer to~\cite{PopovychBoykoNesterenko2003,Shchepochkina2006} for other examples of realisations 
of Lie algebras by using vector fields.
\end{example}

For the sake of definition 
let us take $k_0=1$ 
so that $w^1 = w$ for 
$n = 2$ with $x^1 = x$ and~$x^2 = t$; set $\mu=1$.
Using the representation $\varrho$,  
we construct the prolongations of total derivatives,
\[
\tilde{D}_{x} =  \bar{D}_{x} + w_{x} \frac{\partial}{\partial w}, \qquad
\tilde{D}_{t} =  \bar{D}_{t} + w_{t} \frac{\partial}{\partial w},
\]
and inspect the way in which they 
act on the nonlocal variable~$w$ along $W$:
\[
w_x  = \bar{D}_x \inner \Id w, \qquad
w_t  = \bar{D}_t \inner \Id w.
\]
We thus obtain a one\/-\/dimensional covering $\tau \colon \tilde{\cE} = W\times\cEinf \to \cEinf$ with
nonlocal variable~$w$.

We claim that Gardner's deformation~\eqref{DefKdV} and zero\/-\/curvature representation~\eqref{exLaxZCR} for KdV equation~\eqref{kdv} determine the coverings which are related by using an $SL(2,\BBC)$-\/valued gauge transformation.

\begin{example}[The covering which is based on Gardner's
deformation]\label{exGardnerCov}
Consider the Gardner deformation 
of Korteweg\/--\/de Vries equation~\eqref{kdv},
\begin{align*}
\mathfrak{m}_{\veps} = & \left\{ u_{12}  = \tu_{12} - \veps \tu_{12;x} - \veps^2\tu_{12}^2
  \right\}\colon\cE_\veps\to\cE_0,\label{miura}\tag{\ref{KdVeKdV}}\\
\cE_{\veps} = & \left\{\tu_{12;t}  = - (\tu_{12;xx} + 3\tu_{12}^2 -
  2\veps^2\tu_{12}^3)_x \right\}, \label{kdvext}\tag{\ref{KdVe}}
\end{align*}
Expressing $\tu_{12;x}$ from~\eqref{miura} and substituting it
in~\eqref{kdvext}, we obtain the one\/-\/dimensional covering over the
KdV equation,
\begin{subequations}
\label{miuracov}
\begin{align}
  \tu_{12;x}& = \frac{1}{\veps}( \tu_{12} - u_{12}) - \veps\tu_{12}^2\label{miuracovx},\\
  \tu_{12;t}& = \frac{1}{\veps}( u_{12;xx} + 2u_{12}^2) + \frac{1}{\veps^2} u_{12;x} + \frac{1}{\veps^3}u_{12} 
  + \left(-2u_{12;x} - \frac{2}{\veps}u_{12} - \frac{1}{\veps^3}\right)\tu_{12} 
  + \left(2\veps u_{12} +  \frac{1}{\veps}\right)\tu_{12}^2, \label{miuracovext}
\end{align}
\end{subequations}
From this covering we derive the~$\gsl_2(\BBC)$-\/valued zero\/-\/curvature representation for~\eqref{kdv}:
\begin{equation}\label{eqKdVGDZCR}
  \alpha^{\text{KdV}}_2  = 
  \begin{pmatrix}
    \frac{1}{2\veps} & \frac{u_{12}}{\veps} \\
    -\veps & -\frac{1}{2\veps}
  \end{pmatrix}\, \Id x
  +
  \begin{pmatrix}
    u_{12;x} + \frac{1}{\veps}u_{12} + \frac{1}{2\veps^3} &
       - \frac{1}{\veps}( u_{12;xx} + 2u_{12}^2) - \frac{1}{\veps^2} u_{12;x} - \frac{1}{\veps^3}u_{12}  \\
    2\veps u_{12} +  \frac{1}{\veps} &
      - u_{12;x} - \frac{1}{\veps}u_{12} - \frac{1}{2\veps^3}
  \end{pmatrix}\, \Id t.
\end{equation}
The gauge transformation between zero\/-\/curvature representations~\eqref{zcrBVV} and~\eqref{eqKdVGDZCR} is given by the group element
\begin{equation}\label{gaugeMiura2BVV}
  S = 
  \begin{pmatrix}
    \ii/\sqrt{\veps} & \ii/(2\veps\sqrt{\veps})\\
    0 & -\ii\sqrt{\veps}
  \end{pmatrix},
\end{equation}
where we set $\lambda = \tfrac{1}{4\veps^2}$
to match the spectral parameter~$\lambda$ in~\eqref{zcrBVV} and Gardner's deformation parameter~$\varepsilon$.
\end{example}



Let us apply the same construction to Krasil'shchik\/--\/Kersten's system~\eqref{eqKK} and by this, derive the recurrence relation between the integrals of motion in its hierarchy.

\section{The 
deformation of Krasil'shchik\/--\/Kersten's system}\label{secGDKK}
From the paper~\cite{KarasuSakovichYurduen2003} we know that 
Krasil'shchik\/--\/Kersten's system~\eqref{eqKK} admits the
$\gsl_3(\BBC)$-\/valued zero\/-\/curvature representation $\alpha^{\text{KK}}_1 = A^{\text{KK}}_1\, \Id x +
B^{\text{KK}}_1 \, \Id t$, where 
\begin{gather*}
A_1^{\text{KK}} = \begin{pmatrix}
  \eta &\underline{u}{}_{12} - u_0^2 + 9\eta^2 & u_0  \\
   1 & \eta &                 0 \\
   0 & 6\eta u_0 &             -2\eta
\end{pmatrix},
\\
B_1^{\text{KK}} = \begin{pmatrix}
b_{11}  &
  b_{12} &
     - 18\eta^2u_{0}  - 3\eta u_{0;x} - u_{0;xx} + u_{0}^3 + 2u_{0} \underline{u}{}_{12}  
\\
 - 36\eta^2 + u_{0}^2 + 2\underline{u}{}_{12} &
    -b_{11} - 72\eta^3 - 6\eta u_{0}^2 & 
       - 6\eta u_{0}  - u_{0;x} 
\\
 - 36\eta^2u_{0}  + 6\eta u_{0;x} &
   b_{32} &
       72\eta^3 - 6\eta u_{0}^2
\end{pmatrix};
\end{gather*}
the elements $b_{11}$,\ $b_{12}$,\ and~$b_{32}$ of the matrix $B^{\text{KK}}_1$ are as follows:
\begin{align*}
  b_{11} = {}&{} - 36\eta^3 + 3\eta u_{0}^2 + u_{0;x}u_{0}  + \underline{u}{}_{12;x},\\
  b_{12} = {}&{}  - 324\eta^4 + 9\eta^2(u_{0}^2 - 2\underline{u}{}_{12} ) - u_{0;xx}u_{0}  - u_{0;x}^2 - \underline{u}{}_{12;xx} - u_{0}^4 -
     u_{0}^2 \underline{u}{}_{12}  + 2 \underline{u}{}_{12}^2,\\
  b_{32} = {}&{}     - 108\eta^3u_{0}  + 18\eta^2u_{0;x} + 6\eta ( - u_{0;xx} + u_{0}^3 + 2u_{0} \underline{u}{}_{12} ).
\end{align*}
Let us find 
the matrix $S^{\text{KK}} \in SL_3(\BBC) \hookrightarrow C^\infty(\cEinf, SL_3(\BBC))$ of the gauge transformation 
that makes the classical formulae by Gardner a part of the covering over system~\eqref{eqKK}.
By definition, we put $\veps = \eta^2$ in~\eqref{gaugeMiura2BVV}.
Next, let us enlarge the old group~$SL(2,\BBC)$ for equation~\eqref{kdv} to 
the gauge group~$SL(3,\BBC)$ of zero\/-\/curvature representation~$\alpha^{\text{KK}}_1$ for system~\eqref{eqKK}. We set
\[
S^{\text{KK}} = \begin{pmatrix}
  \ii \eta^{-1} & \tfrac12 \ii \eta^{-3} & 0 \\
  0             &  -\ii \eta           & 0 \\
  0             & 0                    & 1
\end{pmatrix}.
\]
Applying the gauge transformation $S^{\text{KK}}$ to the zero\/-\/curvature representation $\alpha^{\text{KK}}_1$, we obtain the gauge\/-\/equivalent 
zero\/-\/curvature representation $\alpha^{\text{KK}}_2
\mathrel{{:}{=}} (\alpha^{\text{KK}}_1)^{S^{\text{KK}}} = A^{\text{KK}}_2 \, \Id x + B^{\text{KK}}_2\,\Id t$ for Krasil'shchik\/--\/Kersten's system~\eqref{eqKK}:
\begin{gather*}
A^{\text{KK}}_2 = 
\begin{pmatrix}
  \frac23\eta^{-2} &  u_{0}^2 - \underline{u}{}_{12}    & \ii \eta^{-1}u_{0} \\
  -1               & - \frac13 \eta^{-2} & 0                  \\
   0               & \ii \eta^{-1}u_{0}  & - \frac13 \eta^{-2} 
\end{pmatrix},
\\
B^{\text{KK}}_2 = 
\begin{pmatrix}
  b_{11} &
    b_{12} &
      b_{13} \\
  - u_{0}^2 - 2 \underline{u}{}_{12}  + \eta^{-4} & 
    - u_{0;x}u_{0}  - \underline{u}{}_{12;x} - \eta^{-2} \underline{u}{}_{12}  + \frac13\eta^{-6} &
      \ii \eta^{-1}u_{0;x} + \ii \eta^{-3}u_{0} \\
   - \ii \eta^{-1}u_{0;x} + \ii \eta^{-3}u_{0} &
 \eta^{-1}( - \ii u_{0;xx} + \ii u_{0}^3 + 2\ii u_{0} \underline{u}{}_{12} ) &
        - \eta^{-2}u_{0}^2 + \frac13\eta^{-6}
\end{pmatrix},
\end{gather*}
where the elements $b_{11}$,\ $b_{12}$,\ and~$b_{13}$ of the matrix~$B^{\text{KK}}_2$ are as follows:
\begin{align*}
b_{11} = {} & {} u_{0;x}u_{0}  + \underline{u}{}_{12;x} + \eta^{-2}(u_{0}^2 
+ \underline{u}{}_{12})  - \tfrac23\eta^{-6}, 
\\
b_{12} = {} & {} u_{0;xx}u_{0} + u_{0;x}^2 + \underline{u}{}_{12;xx} + u_{0}^4 + u_{0}^2 \underline{u}{}_{12} - 2 \underline{u}{}_{12}^2 + \eta^{-2}(u_{0;x}u_{0}  
+ \underline{u}{}_{12;x}) + \eta^{-4} \underline{u}{}_{12}, \\
b_{13} = {} & {} \eta^{-1}( - \ii u_{0;xx} + \ii u_{0}^3 + 2\ii u_{0} \underline{u}{}_{12} ) - \ii \eta^{-3}u_{0;x} - \ii \eta^{-5}u_{0}.
\end{align*}
Recalling that formula~\eqref{wroelofs} yields the representation
of matrices~$A^{\text{KK}}_2$ and~$B^{\text{KK}}_2$ in terms of vector
fields, 
from the zero\/-\/curvature representation
$\alpha^{\text{KK}}_2$ we obtain the two\/-\/dimensional covering over Krasil'shchik\/--\/Kersten's system~\eqref{eqKK}.
Denoting the new nonlocal variables by~$\tu_0$ and~$\underline{\tilde{u}}{}_{12}$, we have that their derivatives with respect to the spatial variable~$x$ and time~$t$ are equal to
\begin{subequations}\label{eqKKcov}
\begin{align}
\tu_{0;x} = {} & {}  - \tu_{0} \underline{\tilde{u}}{}_{12}  - \ii \eta^{-1} u_{0}  + \eta^{-2}\tu_{0}, \label{eqKKcov0x}
\\
\underline{\tilde{u}}{}_{12;x} = {} & {}  - \underline{\tilde{u}}{}_{12}^2 - u_{0}^2 + \underline{u}{}_{12}  - \ii \eta^{-1} \tu_{0} u_{0}  
+ \eta^{-2} \underline{\tilde{u}}{}_{12}, \label{eqKKcov12x}
\\
\intertext{and}
\tu_{0;t} = {} & \tu_{0} (u_{0;x}u_{0}  + \underline{u}{}_{12;x} 
- \underline{\tilde{u}}{}_{12} u_{0}^2 - 2 \underline{\tilde{u}}{}_{12} \underline{u}{}_{12} ) 
    + \eta^{-1}(\ii u_{0;xx} - \ii u_{0;x}\tu_{0}^2 - \ii u_{0;x}\underline{\tilde{u}}{}_{12}  - \ii u_{0}^3 - 2\ii u_{0} \underline{u}{}_{12} )
    \notag \\
{}&{} + \eta^{-2}\tu_{0} (2u_{0}^2 + \underline{u}{}_{12} ) 
      + \eta^{-3}(\ii u_{0;x} + \ii \tu_{0}^2u_{0}  - \ii \underline{\tilde{u}}{}_{12} u_{0} ) 
      + \eta^{-4}\tu_{0} \underline{\tilde{u}}{}_{12} 
      + \ii \eta^{-5}u_{0}  
      - \eta^{-6}\tu_{0}, \label{eqKKcov0t}
\\
\underline{\tilde{u}}{}_{12;t} = {} & {} - u_{0;xx}u_{0}  - u_{0;x}^2 + 2u_{0;x}\underline{\tilde{u}}{}_{12} u_{0}  - \underline{u}{}_{12;xx} + 2 \underline{u}{}_{12;x} \underline{\tilde{u}}{}_{12}  -
      \underline{\tilde{u}}{}_{12}^2u_{0}^2 - 2 \underline{\tilde{u}}{}_{12}^2 \underline{u}{}_{12}  - u_{0}^4 - u_{0}^2 \underline{u}{}_{12}  \notag\\
{}&{}  + 2 \underline{u}{}_{12}^2 + \eta^{-1}\tu_{0} (\ii u_{0;xx} - \ii u_{0;x}\underline{\tilde{u}}{}_{12}  - \ii u_{0}^3 - 2\ii u_{0} \underline{u}{}_{12} ) \notag\\
{}&{}  + \eta^{-2}( - u_{0;x}u_{0}  - \underline{u}{}_{12;x} + \underline{\tilde{u}}{}_{12} u_{0}^2 + 2 \underline{\tilde{u}}{}_{12} \underline{u}{}_{12} ) \notag\\
{}&{} 
       + \ii \eta^{-3}\tu_{0} \underline{\tilde{u}}{}_{12} u_{0}  + \eta^{-4}(\underline{\tilde{u}}{}_{12}^2 - \underline{u}{}_{12} ) - \eta^{-6}\underline{\tilde{u}}{}_{12} \label{eqKKcov12t}
\end{align}
\end{subequations}
We note that under the reduction $u_0=0$ and by virtue of the relation~$u_{12}=-\underline{u}{}_{12}$, this covering retracts to Gardner's deformation~\eqref{miuracov} for KdV equation~\eqref{kdv}.

\begin{theorem}[
Gardner's deformation 
of Krasil'shchik\/--\/Kersten's system~\eqref{eqKK}]\label{thKKGD}
The extension $\cE(\veps)$ of~\eqref{eqKK} consists of, first, the evolution equation which is 
not in the form of a conserved current,
\begin{subequations}\label{eqKKGDExt}
\begin{align}
\tu_{0;t} = {} & {}  3\veps^4\tu_{0}^2\underline{\tilde{u}}{}_{12} (2\tu_{0;x}\underline{\tilde{u}}{}_{12}  + \underline{\tilde{u}}{}_{12;x}\tu_{0} ) 
+ 3\veps^3\tu_{0} ( - \tu_{0;xx}\tu_{0} \underline{\tilde{u}}{}_{12}  - 3\tu_{0;x}^2\underline{\tilde{u}}{}_{12}  - \tu_{0;x}\underline{\tilde{u}}{}_{12;x}\tu_{0} )\notag
\\
{}&{}  + 3\veps^2(\tu_{0;xx}\tu_{0;x}\tu_{0}  + \tu_{0;x}^3 + 3\tu_{0;x}\tu_{0}^2\underline{\tilde{u}}{}_{12}  + \tu_{0;x}\underline{\tilde{u}}{}_{12}^2 + \underline{\tilde{u}}{}_{12;x}\tu_{0}^3 + \underline{\tilde{u}}{}_{12;x}\tu_{0} \underline{\tilde{u}}{}_{12} ) \notag
\\ 
{}&{} + 3\veps \tu_{0} ( - \tu_{0;xx}\tu_{0}  - 2\tu_{0;x}^2) - \tu_{0;xxx} + 3\tu_{0;x}\tu_{0}^2 + 3\tu_{0;x}\underline{\tilde{u}}{}_{12}  + 3\underline{\tilde{u}}{}_{12;x}\tu_{0},
\end{align}
and second, the continuity relation
\begin{align}
\underline{\tilde{u}}{}_{12;t} = {} & {} 
\frac{\Id}{\Id x} \Bigl( 
    3\veps^4\tu_{0}^2\underline{\tilde{u}}{}_{12}^3 
  + 3\veps^3\tu_{0} \underline{\tilde{u}}{}_{12} (\tu_{0;x}\underline{\tilde{u}}{}_{12}  - \underline{\tilde{u}}{}_{12;x}\tu_{0} ) 
  + \veps^2( - 3\tu_{0;xx}\tu_{0} \underline{\tilde{u}}{}_{12} - 3\tu_{0;x}\underline{\tilde{u}}{}_{12;x}\tu_{0}  
     + 2\underline{\tilde{u}}{}_{12}^3  \notag \\
{}&{} 
     + 6\tu_{0}^2\underline{\tilde{u}}{}_{12}^2) 
  + 3\veps ( \tu_{0;xx}\tu_{0;x} - \tu_{0;x}\tu_{0} \underline{\tilde{u}}{}_{12}  + \underline{\tilde{u}}{}_{12;x}\tu_{0}^2) 
  - 3\tu_{0;xx}\tu_{0}  - \underline{\tilde{u}}{}_{12;xx} + 3\tu_{0}^2\underline{\tilde{u}}{}_{12}  + 3\underline{\tilde{u}}{}_{12}^2 \Bigr).
\end{align}
\end{subequations}
The Miura contraction from~\eqref{eqKKGDExt} to~\eqref{eqKK} is
\begin{subequations}\label{eqKKGDMiura}
\begin{align}
u_{0} = {} & {}  \tu_{0} - \veps \tu_{0;x} + \veps^2\underline{\tilde{u}}{}_{12} \tu_{0},
\\
\underline{u}{}_{12} = {} & {}  \underline{\tilde{u}}{}_{12} 
      - \veps(\underline{\tilde{u}}{}_{12;x} + \tu_{0;x}\tu_{0} )
      + \veps^2(\tu_{0;x}^2 + \underline{\tilde{u}}{}_{12}^2 + \underline{\tilde{u}}{}_{12} \tu_{0}^2) 
      - 2\veps^3u_{0;x}\underline{\tilde{u}}{}_{12} \tu_{0}  
      + \veps^4\underline{\tilde{u}}{}_{12}^2\tu_{0}^2. 
\end{align}
\end{subequations}
Under the reduction~$u_0=0$, this deformation retracts to classical Gardner's formulas~\eqref{DefKdV}. 
\end{theorem}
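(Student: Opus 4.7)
The plan is to derive both the extended system~\eqref{eqKKGDExt} and the Miura contraction~\eqref{eqKKGDMiura} from the previously obtained covering~\eqref{eqKKcov} via a change of the deformation parameter and a rescaling of the nonlocal variables, and then to verify the assertion by direct substitution. The key observation is that~\eqref{eqKKcov} depends on the spectral parameter $\eta$ only through negative powers $\eta^{-k}$ with $k=1,\ldots,6$, while a Gardner deformation should be polynomial in~$\veps$. Following the strategy of Example~\ref{exGardnerCov} --\,where $\lambda=1/(4\veps^2)$ and the gauge element~\eqref{gaugeMiura2BVV} also rescales the nonlocality by~$\sqrt{\veps}$\,-- I would introduce $\veps=\eta^{-k}$ for an appropriate integer $k$ together with a rescaling $\tu_0\mapsto\eta^{a}\tu_0$ and $\underline{\tilde{u}}{}_{12}\mapsto\eta^{b}\underline{\tilde{u}}{}_{12}$ by suitable integer powers, chosen so that the $x$-\/components~\eqref{eqKKcov0x} and~\eqref{eqKKcov12x} of the covering become polynomial in~$\veps$. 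Solving these two algebraic relations for~$u_0$ and~$\underline{u}{}_{12}$ produces~\eqref{eqKKGDMiura}; substituting the results back into the $t$-\/components~\eqref{eqKKcov0t} and~\eqref{eqKKcov12t} and simplifying yields~\eqref{eqKKGDExt}.

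Independently, the assertion can be verified by direct computation. Namely, differentiate both relations of~\eqref{eqKKGDMiura} with respect to~$t$ using the extended flow~\eqref{eqKKGDExt} to eliminate $\tu_{0;t}$ and $\underline{\tilde{u}}{}_{12;t}$ (applying $\bar D_x$ as needed), and compare the outcome to the right-hand sides of~\eqref{eqKK} evaluated after substituting the Miura contraction for~$u_0$ and~$\underline{u}{}_{12}$. The required identity is polynomial in~$\veps$ of degree at most eight and must hold coefficient-wise in powers of~$\veps$. That the second component of~\eqref{eqKKGDExt} is a conserved current is immediate from its explicit $\tfrac{\Id}{\Id x}(\,\cdot\,)$ form, providing the conservation property in the spirit of Definition~\ref{DefGardner}.

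For the retraction to classical Gardner's deformation~\eqref{DefKdV}, one takes the consistent reduction $\tu_0=0$, under which the first relation of~\eqref{eqKKGDMiura} forces $u_0=0$ and the second becomes $\underline{u}{}_{12}=\underline{\tilde{u}}{}_{12}-\veps\underline{\tilde{u}}{}_{12;x}+\veps^2\underline{\tilde{u}}{}_{12}^2$. In view of the sign convention $u_{12}=-\underline{u}{}_{12}$ from~\eqref{kdv}, this coincides with~\eqref{KdVeKdV} for the minus-sign choice, while the second equation of~\eqref{eqKKGDExt} simultaneously reduces to the conserved form~\eqref{KdVe}, closing this part of the argument.

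The principal obstacle is the sheer volume of differential-polynomial algebra involved in the direct verification: the comparison of $t$-\/derivatives of~\eqref{eqKKGDMiura} under~\eqref{eqKKGDExt} against the substituted right-hand sides of~\eqref{eqKK} yields an identity which, when expanded in powers of~$\veps$, splits into several large subidentities requiring separate checking. In practice this step is delegated to one of the symbolic-computation packages cited in the introduction. The less routine conceptual step is identifying the correct rescaling of $(\tu_0,\underline{\tilde{u}}{}_{12})$ and the correct exponent~$k$ in $\veps=\eta^{-k}$ that transform the $\eta^{-j}$-\/series of~\eqref{eqKKcov} into the $\veps^j$-\/polynomials in~\eqref{eqKKGDMiura} and~\eqref{eqKKGDExt}; once this scaling is fixed, the remainder of the proof is mechanical.
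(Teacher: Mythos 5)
Your proposal takes essentially the same route as the paper's own proof: express $u_0$ and $\underline{u}{}_{12}$ from the $x$-components~(\ref{eqKKcov0x}--\ref{eqKKcov12x}) of the covering (this gives the Miura contraction), plug the result into the $t$-components~(\ref{eqKKcov0t}--\ref{eqKKcov12t}) to obtain the extended flow, and then rescale the nonlocalities and re-express the parameter so that the negative powers of~$\eta$ disappear. The only detail left open in your argument is fixed in the paper as $\tu_{0;\text{new}}=\ii\eta\,\tu_{0;\text{old}}$, $\underline{\tilde{u}}{}_{12;\text{new}}=\eta^{2}\underline{\tilde{u}}{}_{12;\text{old}}$ with $\eta=\sqrt{\veps}$ (so $\veps=\eta^{2}$, a positive power, the rescaling of the fibre variables doing the work of absorbing the $\eta^{-k}$ terms), and your additional direct-verification and $u_0=0$ reduction checks are consistent with, though not part of, the paper's proof.
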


\begin{proof}
Let us express~$u_0$ and $\underline{u}{}_{12}$ from~(\ref{eqKKcov0x}-\ref{eqKKcov12x}) and plug them in~(\ref{eqKKcov0t}-\ref{eqKKcov12t}). We get
\begin{align*}
u_{0} = {} & {} \ii \eta (\tu_{0;x} + \tu_{0} \underline{\tilde{u}}{}_{12} ) - \ii \eta^{-1}\tu_{0} ,
\\
\underline{u}{}_{12} =  {} & {} \eta^2( - \tu_{0;x}^2 - 2\tu_{0;x}\tu_{0} \underline{\tilde{u}}{}_{12}  - \tu_{0}^2\underline{\tilde{u}}{}_{12}^2) +
\tu_{0;x}\tu_{0}  + \underline{\tilde{u}}{}_{12;x} + \tu_{0}^2\underline{\tilde{u}}{}_{12}  + \underline{\tilde{u}}{}_{12}^2 - \eta^{-2}\underline{\tilde{u}}{}_{12},
\\
u_{0;t} = {} & {} 3\eta^2( - \tu_{0;xx}\tu_{0;x}\tu_{0}  - \tu_{0;xx}\tu_{0}^2\underline{\tilde{u}}{}_{12}  - \tu_{0;x}^3 -
       3\tu_{0;x}^2\tu_{0} \underline{\tilde{u}}{}_{12}  - \tu_{0;x}\underline{\tilde{u}}{}_{12;x}\tu_{0}^2 - 2\tu_{0;x}\tu_{0}^2\underline{\tilde{u}}{}_{12}^2 \\ 
    {} & {} - \underline{\tilde{u}}{}_{12;x}\tu_{0}^3\underline{\tilde{u}}{}_{12} )- \tu_{0;xxx} + 3\tu_{0;xx}\tu_{0}^2 + 6\tu_{0;x}^2\tu_{0}  + 9\tu_{0;x}\tu_{0}^2\underline{\tilde{u}}{}_{12}  +
    3\tu_{0;x}\underline{\tilde{u}}{}_{12}^2 + 3\underline{\tilde{u}}{}_{12;x}\tu_{0}^3  \\
    {} &{} + 3\underline{\tilde{u}}{}_{12;x}\tu_{0} \underline{\tilde{u}}{}_{12}  - 3\eta^{-2}(\tu_{0;x}\tu_{0}^2 + \tu_{0;x}\underline{\tilde{u}}{}_{12}  + \underline{\tilde{u}}{}_{12;x}\tu_{0} ),
\\
\underline{u}{}_{12;t} = {} & {} 3\eta^2(\tu_{0;xxx}\tu_{0;x} + \tu_{0;xxx}\tu_{0} \underline{\tilde{u}}{}_{12}  + \tu_{0;xx}^2 +
    \tu_{0;xx}\tu_{0;x}\underline{\tilde{u}}{}_{12}  + 2\tu_{0;xx}\underline{\tilde{u}}{}_{12;x}\tu_{0} - \tu_{0;xx}\tu_{0} \underline{\tilde{u}}{}_{12}^2 \\
    {} & {} + \tu_{0;x}^2\underline{\tilde{u}}{}_{12;x} - \tu_{0;x}^2\underline{\tilde{u}}{}_{12}^2 + \tu_{0;x}\underline{\tilde{u}}{}_{12;xx}\tu_{0}  - 2\tu_{0;x}\tu_{0}
    \underline{\tilde{u}}{}_{12}^3 + \underline{\tilde{u}}{}_{12;xx}\tu_{0}^2\underline{\tilde{u}}{}_{12}  + \underline{\tilde{u}}{}_{12;x}^2\tu_{0}^2 - 3\underline{\tilde{u}}{}_{12;x}\tu_{0}^2\underline{\tilde{u}}{}_{12}^2) \\
    {} & {}- 3\tu_{0;xxx}\tu_{0}  - 3\tu_{0;xx}\tu_{0;x} + 3\tu_{0;xx}\tu_{0} \underline{\tilde{u}}{}_{12}  + 3\tu_{0;x}^2\underline{\tilde{u}}{}_{12}
    - 3\tu_{0;x}\underline{\tilde{u}}{}_{12;x}\tu_{0}  + 12\tu_{0;x}\tu_{0} \underline{\tilde{u}}{}_{12}^2 \\ 
    {} & {} - \underline{\tilde{u}}{}_{12;xxx} - 3\underline{\tilde{u}}{}_{12;xx}\tu_{0}^2 + 12\underline{\tilde{u}}{}_{12;x}\tu_{0}^2\underline{\tilde{u}}{}_{12}  + 6\underline{\tilde{u}}{}_{12;x}\underline{\tilde{u}}{}_{12}^2 \\
    {} & {}+ 3\eta^{-2}( - 2\tu_{0;x}\tu_{0} \underline{\tilde{u}}{}_{12}  - \underline{\tilde{u}}{}_{12;x}\tu_{0}^2 - 2\underline{\tilde{u}}{}_{12;x}\underline{\tilde{u}}{}_{12} ).
\end{align*}
Setting $\tu_{0;\text{new}} = \ii \eta\,\tu_{0;\text{old}}$ 
and $\underline{\tilde{u}}{}_{12;\text{new}} = \eta^2\underline{\tilde{u}}{}_{12;\text{old}}$, 
and putting $\eta=\sqrt{\veps}$, 
we derive formulas~(\ref{eqKKGDExt}-\ref{eqKKGDMiura}).
\end{proof}


\begin{remark}
This change of variables removes the singularity at $\eta=\veps=0$ in
formulas~\eqref{eqKKcov}. (Let us recall that Gardner's deformations with a singularity at
$\veps=0$ do yield the recurrence relations between the conserved
densities, see~\cite{JMP2012,AOKThesis} for an example of such deformation
for the $N{=}2$ supersymmetric $a{=}4$-Korteweg\/--\/de Vries equation.)
\end{remark}

In Theorem~\ref{thKKGD} we obtained the deformation of Krasil'shchik\/--\/Kersten's system such that one of the extended equations is not a continuity relation, which makes the construction different from the classical concept of Gardner's deformation. 
Nevertheless, deformations of such unconventional type do yield the recurrence relations between the conserved densities.

\begin{theorem}
Gardner's deformation~(\ref{eqKKGDExt}-\ref{eqKKGDMiura}) 
for Krasil'shchik\/--\/Kersten's system~\eqref{eqKK} 
yields the following recurrence relations between the conserved densities, 
which we denote by~$\underline{w}{}_n$ for~$n\in\BBN_{\geqslant0}$\textup{:}
\begin{align*}
\underline{w}{}_0 = {} & {} \underline{u}{}_{12}, \qquad 
\underline{w}{}_1 = \underline{u}{}_{12;x} + u_{0;x}u_0,\\
\underline{w}{}_2 = {} & {} D_x \underline{w}{}_1  + D_x (v_0v_1) - u_{0;x}^2 - \underline{u}{}_{12}^2 - \underline{u}{}_{12} u_{0}^2,\\
\underline{w}{}_3 = {} & {} D_x \underline{w}{}_2 + \frac12\sum_{k=0}^2 D_x (v_k v_{2-k}) - 2 D_x( v_0)D_x( v_1) - 2 \underline{w}{}_1 \underline{w}{}_0 - \underline{w}{}_1 v_0^2 
        - 2\underline{w}{}_0 v_1v_0 + 2u_{0;x}u_{0} \underline{u}{}_{12},
\\
\underline{w}{}_n = {} & {} D_x \underline{w}{}_{n-1} + \frac12\sum_{k=0}^{n-1} D_x(v_k v_{n-1-k}) 
        - \sum_{k=0}^{n-2} \left( D_x(v_k) D_x(v_{n-2-k}) - \underline{w}{}_k \underline{w}{}_{n-2-k} \right) \\
      {} & {} - \sum_{k+l+j=n-2} \underline{w}{}_k v_l v_j  + 2\sum_{k+l+j=n-3} \underline{w}{}_k v_l D_x  v_j
        - \sum_{k+l+j+i=n-4} \underline{w}{}_k \underline{w}{}_l v_j v_i,\qquad n\geqslant4,
\end{align*}
where the quantities~$v_i$ are given by the formulas
\[
v_0 = {}  u_0, \qquad v_1 = u_{0;x}, \qquad v_n = D_x v_{n-1} - \sum_{k=0}^{n-2} \underline{w}{}_k v_{n-2-k}\quad \text{for $n\geqslant2$.}
\]
The generating function $\underline{\breve{w}}(u_0,\underline{u}{}_{12},\veps)$ of the zero differential order component of the series $w([u_0,\underline{u}{}_{12}],\veps)$
is given by the formula
\begin{equation}\label{eqKKGDGenFun}
\underline{\breve{w}} = \frac{12\veps^2( - u_0^2 + \underline{u}{}_{12}) + q^2 - 4q + 4}{6\veps^2 q},
\end{equation}    
where we put
\begin{multline*}
q = 2^{2/3}\Bigl(9\veps^2(2u_{0}^2 + \underline{u}{}_{12})  + 2\\
    + 3\sqrt{3}\veps\sqrt{4\veps^4(u_{0}^6 - 3u_{0}^4 \underline{u}{}_{12} + 3u_{0}^2 \underline{u}{}_{12}^2 - \underline{u}{}_{12}^3) + \veps^2(8u_{0}^4 + 20u_{0}^2 \underline{u}{}_{12} - \underline{u}{}_{12}^2) + 4u_{0}^2} \Bigr)^{1/3}.
\end{multline*}
\end{theorem}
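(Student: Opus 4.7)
The plan is to substitute the formal power series $\tu_{0} = \sum_{k\geqslant0} v_k\,\veps^k$ and $\underline{\tilde{u}}{}_{12} = \sum_{k\geqslant0} \underline{w}{}_k\,\veps^k$ into Miura contraction~(\ref{eqKKGDMiura}) and equate coefficients of like powers of~$\veps$. First I would read off $v_0 = u_0$ and $\underline{w}{}_0 = \underline{u}{}_{12}$ from the $\veps^0$ components of the two relations, and then $v_1 = u_{0;x}$ and $\underline{w}{}_1 = \underline{u}{}_{12;x} + u_{0;x}u_0$ from the $\veps^1$ components. For $n\geqslant 2$ the first Miura relation rearranges directly to $v_n = D_x v_{n-1} - \sum_{k=0}^{n-2}\underline{w}{}_k v_{n-2-k}$. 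Likewise, the second Miura relation --- a $\veps$-polynomial of degree~$4$ carrying quadratic, cubic, and quartic products of~$v_k$'s and~$\underline{w}{}_k$'s --- yields, after careful grouping (using in particular the identity $\sum_{k+j=m}(D_x v_k) v_j = \tfrac{1}{2}\sum_{k+j=m} D_x(v_k v_j)$), the stated recursion for~$\underline{w}{}_n$; the individual cases $\underline{w}{}_2$ and $\underline{w}{}_3$ then serve as sanity checks before writing the closed form for general~$n$.

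To identify the $\underline{w}{}_n$ as conserved densities on~(\ref{eqKK}), I would invoke the second component of~(\ref{eqKKGDExt}), which is by construction presented in the form $D_x(\,\cdot\,)$. Since series substitution commutes with the total derivatives, the $\veps^n$-coefficient of $D_t\underline{\tilde{u}}{}_{12}$ is a total $x$-derivative on the extended equation, hence on~(\ref{eqKK}) after the Miura contraction is applied. The first component of~(\ref{eqKKGDExt}) --- which is \emph{not} a continuity relation --- plays no role in the generation of the tower, in line with the observation recorded after Definition~\ref{DefGardner} that only the conservation of the extension is needed in order to propagate the densities.

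For the generating function~$\underline{\breve{w}}$, I would specialise~(\ref{eqKKGDMiura}) by setting every $x$-derivative (of both the tilded and un-tilded variables) equal to zero; this reduces the Miura contraction to the two algebraic relations $u_0 = \tu_0(1+\veps^2\underline{\tilde{u}}{}_{12})$ and $\underline{u}{}_{12} = \underline{\tilde{u}}{}_{12}+\veps^2\underline{\tilde{u}}{}_{12}^2 + \veps^2\underline{\tilde{u}}{}_{12}\tu_0^2 + \veps^4\underline{\tilde{u}}{}_{12}^2\tu_0^2$. Eliminating $\tu_0 = u_0/(1+\veps^2\underline{\tilde{u}}{}_{12})$ produces a cubic in~$\underline{\tilde{u}}{}_{12}$, and Cardano's formula --- with the branch of the cube root in~$q$ fixed by the normalisation $\underline{\breve{w}}\to\underline{u}{}_{12}$ as $\veps\to 0$ --- yields~(\ref{eqKKGDGenFun}). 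The hard part will be the bookkeeping for the $\underline{w}{}_n$ recursion at large~$n$, where one must track exactly which products of series coefficients come from which of the four $\veps$-terms in the second Miura relation, together with the correct identification of the branch of the cubic root so that the closed\/-\/form generating function matches the series expansion for small~$\veps$; apart from these combinatorial details, the result follows by direct manipulation of formal power series.
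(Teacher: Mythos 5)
Your proposal is correct and follows essentially the same route as the paper: expand $\tu_0$ and $\underline{\tilde{u}}{}_{12}$ in powers of~$\veps$, read the recurrences off the Miura contraction, deduce conservation of the~$\underline{w}{}_k$ from the divergence form of the $\underline{\tilde{u}}{}_{12;t}$-equation, and obtain the generating function by dropping all $x$-derivatives, eliminating~$\breve v$, and selecting the root of the resulting cubic via the limit $\underline{\breve w}\to\underline{u}{}_{12}$ as $\veps\to0$. The only (cosmetic) divergence is that the paper's proof nominally says the series are plugged into the extension~(\ref{eqKKGDExt}), whereas the recurrences as stated actually arise from the contraction~(\ref{eqKKGDMiura}), exactly as you describe.
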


\begin{proof}
Plugging the series $\tu_0 = \sum_{k=0}^{+\infty} \veps^k v_k$ 
and $\underline{\tilde{u}}{}_{12} = \sum_{k=0}^{+\infty} \veps^k \underline{w}{}_k$ 
into~\eqref{eqKKGDExt}, 
we obtain the recurrence relations between~$v_k$ and~$\underline{w}{}_k$. The 
coefficients~$\underline{w}{}_k$ are conserved because $\underline{\tilde{u}}{}_{12;t}$~is in divergent form (i.e., the velocity of $\underline{\tu}{}_{12}$ is in the image of $\Id/\Id x$).
The coefficients~$v_k$ are auxiliary quantities which, generally speaking, could 
be not conserved 
(note that the density $v_0$ is also conserved by virtue of~\eqref{eqKK} at $\veps=0$).

The respective zero\/-\/order components~$\breve{v}(u_0,\underline{u}{}_{12},\veps)$ 
and~$\underline{\breve{w}}(u_0,\underline{u}{}_{12},\veps)$ of~$\tu_0$ and~$\underline{\tu}{}_{12}$
satisfy the equations 
\begin{subequations}
\begin{align}
u_0 = {} & {} \breve{v} + \veps^2 \breve{v}\underline{\breve{w}}, \label{eqKKGenFunEq1}\\
\underline{u}{}_{12} = {} & {} \underline{\breve{w}} + \veps^2(\breve{v}^2 \underline{\breve{w}} + \underline{\breve{w}}{}^2) + \veps^4\underline{\breve{w}}{}^2\breve{v}^2. \label{eqKKGenFunEq2}
\end{align}
\end{subequations}
Expressing~$\breve{v}$ from~\eqref{eqKKGenFunEq1}, we obtain that
\[
\breve{v} = \frac{u_0}{1 + \veps^2 \underline{\breve{w}}}.
\]
Substituting this expression 
further in~\eqref{eqKKGenFunEq2}, we obtain the cubic 
equation for~$\underline{\breve{w}}$,
\[
\veps^4\underline{\breve{w}}{}^3+2\veps^2\underline{\breve{w}}{}^2+(\veps^2u_0^2-\veps^2 \underline{u}{}_{12}+1)\underline{\breve{w}} - \underline{u}{}_{12} = 0.
\]
The limit behaviour $\lim_{\veps\to0} \underline{\breve{w}} = \underline{u}{}_{12}$
of its solution at the origin of deformation parameter prescribes that
we 
pick the root~\eqref{eqKKGDGenFun}. 
\end{proof}

Let us calculate several low\/-\/order conserved densities for system~\eqref{eqKK} by using the recurrence relations for~$\underline{w}_n$:
\begin{align*}
\underline{w}{}_0 = {} & \underline{u}{}_{12},\\
\underline{w}{}_1 = {} & u_{0;x}u_0 + \underline{u}{}_{12;x}, \\
\underline{w}{}_2 = {} & - \underline{u}{}_{12}u_{0}^2 - \underline{u}{}_{12}^2+2u_0u_{0;xx}+u_{0;x}^2+u_{0;xx}, 
\\
\underline{w}{}_3 = {} & -u_{0}^3u_{0;x}-6 \underline{u}{}_{12}u_{0}u_{0;x} - 3\underline{u}{}_{12;x}u_{0}^2 - 4\underline{u}{}_{12}\underline{u}{}_{12;x} + 3u_{0}u_{0;xxx} +  4u_{0;x}u_{0;xx} + \underline{u}{}_{12;xxx}, 
\\
\underline{w}{}_4 = {} & \underline{u}{}_{12}u_{0}^4 + 4\underline{u}{}_{12}^2u_{0}^2 - 4u_{0}^3u_{0;xx} - 8u_{0}^2u_{0;x}^2 
    + 2\underline{u}{}_{12}^3 - 13\underline{u}{}_{12}u_{0}u_{0;xx} - 8\underline{u}{}_{12}u_{0;x}^2 - 19\underline{u}{}_{12;x}u_{0}u_{0;x} 
\\ 
  {} &{} - 6\underline{u}{}_{12;xx}u_{0}^2  - 6\underline{u}{}_{12}\underline{u}{}_{12;xx} - 5\underline{u}{}_{12;x}^2 + 4u_{0}u_{0;4x} + 7u_{0;x}u_{0;xxx} + 4u_{0;xx}^2 + \underline{u}{}_{12;4x},\qquad \text{etc.}
\end{align*}

\begin{lemma}
The conserved densities $\underline{w}{}_{2k}$ with even indexes $2k\geqslant0 $ are non\/-\/trivial.
\end{lemma}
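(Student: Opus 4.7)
The plan is to exploit the reduction property stated in Theorem~\ref{thKKGD}: the KK Gardner deformation~(\ref{eqKKGDExt}--\ref{eqKKGDMiura}) retracts at $u_0=0$ to the classical KdV Gardner deformation~\eqref{DefKdV}. Combining this with the well--known nontriviality of the even KdV Gardner densities recalled in Example~\ref{ExKdVe} will yield the claim by a contradiction argument based on invariance of triviality under restriction to the submanifold $\{u_0 = 0\}$.

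First I would check that setting $u_0 = 0$ in~\eqref{eqKK} turns the second equation into the tautology $0 = 0$ and the first equation into KdV~\eqref{kdv} (in the $\underline{u}{}_{12}$-convention). Hence the locus $\{u_0 \equiv 0\} \subset \cE^\infty_{\text{KK}}$ is naturally identified with $\cE^\infty_{\text{KdV}}$, and the restriction of $\bar{D}_x$ on $\cE^\infty_{\text{KK}}$ coincides with $\bar{D}_x$ on $\cE^\infty_{\text{KdV}}$. From this it follows that triviality of a conserved density is preserved under the restriction: if $\underline{w}{}_{2k} = \bar{D}_x(\xi)$ on $\cE^\infty_{\text{KK}}$ for some horizontal zero-form~$\xi$ of finite differential order, then $\underline{w}{}_{2k}\bigl|_{u_0=0} = \bar{D}_x\bigl(\xi\bigl|_{u_0=0}\bigr)$ on $\cE^\infty_{\text{KdV}}$, so the restricted density would be trivial on KdV.

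Next I would identify $\underline{w}{}_{2k}\bigl|_{u_0=0}$ explicitly. By Theorem~\ref{thKKGD} the Miura contraction~\eqref{eqKKGDMiura} and the extension~\eqref{eqKKGDExt} collapse, upon putting $u_0 = 0$ (equivalently $\tu_0 = 0$, whence inductively $v_k = 0$ for all $k \geqslant 0$ from the recurrence for $v_n$), exactly to formulas~\eqref{KdVeKdV}--\eqref{KdVe} with $\underline{\tu}{}_{12}$ and $\underline{u}{}_{12}$ replacing $\tu_{12}$ and $u_{12}$. Uniqueness of the Taylor expansion of $\underline{\tu}{}_{12}$ in $\veps$ then gives $\underline{w}{}_n\bigl|_{u_0 = 0} = \tu_{12}^{(n)}\bigl|_{u_{12} = -\underline{u}{}_{12}}$ for every $n \in \BBN_{\geqslant 0}$. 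Therefore triviality of $\underline{w}{}_{2k}$ on the KK system would imply triviality of the classical Gardner density $\tu_{12}^{(2k)}$ on KdV, contradicting the classical fact (Example~\ref{ExKdVe} and~\cite{Miura68,Gardner,Magri}) that the even-indexed $\tu_{12}^{(2k)}$ generate the nontrivial KdV hierarchy.

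The main obstacle I anticipate is the rigorous verification that the restriction operation $\bigl|_{u_0 = 0}$ is well defined at the level of equivalence classes of conserved densities, i.e.\ that a local trivialising form $\xi \in C^\infty(\cE^\infty_{\text{KK}})$ restricts to a local form on $\cE^\infty_{\text{KdV}}$. This is straightforward here because $\xi$ can be chosen polynomial in the jet variables $u_{0;\sigma}$, $\underline{u}{}_{12;\tau}$ (as $\underline{w}{}_{2k}$ itself is a differential polynomial generated by the recurrence), so $\xi\bigl|_{u_0 = 0}$ is again a differential polynomial on $\cE^\infty_{\text{KdV}}$. Once this point is secured, the argument reduces to the classical KdV result and the lemma follows.
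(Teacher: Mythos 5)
Your proposal is correct, and it rests on the same two pillars as the paper's own proof --- the reduction $u_0=0$ (equivalently $\tu_0=0$, whence $v_k\equiv0$ inductively), under which deformation~(\ref{eqKKGDExt}--\ref{eqKKGDMiura}) collapses to the classical Gardner formulas~\eqref{DefKdV}, and the known nontriviality of the even KdV densities $\tu_{12}^{(2k)}$ --- but the closing mechanism is genuinely different. The paper does not use a ``restriction preserves triviality'' argument at all: it writes $\underline{w}{}_{2k}=\tu_{12}^{(2k)}\bigl([\underline{u}{}_{12}]\bigr)+f_{2k}\bigl([\underline{u}{}_{12}],[u_0]\bigr)$ with $f_{2k}\bigr|_{u_0=0}=0$, invokes Kupershmidt's leading\/-\/term formula $\tu_{12}^{(2k)}=c_k\,\underline{u}{}_{12}^k+\ldots$ with $c_k\neq0$, notes that every monomial of $f_{2k}$ contains $u_0$ or its $x$-derivatives and so cannot produce $\underline{u}{}_{12}^k$, and concludes that $\underline{w}{}_{2k}$ contains a zeroth\/-\/order monomial and therefore cannot lie in $\img \Id/\Id x$. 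You instead transfer exactness: since the locus $\{u_{0;\sigma}=0\}$ is invariant under $\bar{D}_x$ (each $u_0$-jet term of $\bar{D}_x$ carries a factor $u_{0;(k+1)x}$), one has $\bar{D}_x(\xi)\bigr|_{u_0=0}=\bar{D}_x\bigl(\xi\bigr|_{u_0=0}\bigr)$, so triviality of $\underline{w}{}_{2k}$ on the Krasil'shchik\/--\/Kersten system would force triviality of the restricted density on KdV, contradicting the classical fact recalled in Example~\ref{ExKdVe}. The paper's version stays entirely inside the KK jet space and gives an explicit certificate of nontriviality (the monomial $c_k\underline{u}{}_{12}^k$), at the price of citing Kupershmidt; your version reduces cleanly to the KdV theorem as a black box, at the price of the invariance lemma, which you justify correctly. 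Two cosmetic remarks: polynomiality of the trivialising $\xi$ is not needed --- restriction works for any smooth function of finitely many jet variables, precisely because $\bar{D}_x$ is tangent to the locus; and the identification $\underline{w}{}_n\bigr|_{u_0=0}=\tu_{12}^{(n)}\bigr|_{u_{12}=-\underline{u}{}_{12}}$ holds only up to harmless sign normalisations coming from $u_{12}=-\underline{u}{}_{12}$ and the $\veps\mapsto-\veps$ freedom in~\eqref{KdVeKdV}, which affect neither conservation nor (non)triviality.
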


This follows from the non\/-\/triviality of conserved densities $\tu_{12}^{(2k)}$ generated at all even indexes
by Gardner's deformation~\eqref{DefKdV}
for KdV equation~\eqref{kdv}.

\begin{proof}
The reduction $u_{0}=0$ maps Krasil'shchik\/--\/Kersten's system~\eqref{eqKK} to the Korteweg\/--\/de Vries equation; likewise, the reduction $\tilde{u}_0=0$ takes the recurrence relations between the conserved 
densities~$\underline{w}{}_k$ 
for~\eqref{eqKK} to Gardner's formulas for the 
densities $\tu_{12}^{(k)}$ which are conserved on equation~\eqref{kdv}.
Therefore, the quantities~$\underline{w}{}_{2k}$ have the form
\[
\underline{w}{}_{2k}\bigl([\underline{u}{}_{12}],[u_{0}]\bigr) 
 =
\tu_{12}^{(2k)}\bigl([\underline{u}{}_{12}]\bigr) +
f_{2k}\bigl([\underline{u}{}_{12}],[u_0]\bigr), \qquad k\in\BBN,
\]
where the differential polynomials 
$f_{2k}\bigl([\underline{u}{}_{12}],[u_0]\bigr)$ are 
such that $\left.f_{2k}\right|_{u_0=0}=0$.
On the one hand, the densities~$\tu_{12}^{(2k)}$ are known to be equal to
$\tu_{12}^{(2k)} = c_k \cdot
\underline{u}{}_{12}^k + \ldots$, where
$c_k$ are nonzero constants~\cite{KuperOnNature}.
On the other hand, every monomial in $f_{2k}$
essentially depends on either~$u_0$ or its derivatives with respect to $x$,
hence $f_{2k}$ cannot contain the monomial~$\underline{u}{}_{12}^k$.
Consequently, 
$\underline{w}{}_{2k} = c_k \cdot \underline{u}{}_{12}^k + \ldots$ with~$c_k\neq0$, so that these quantities may not belong to the image of total derivative $\Id/\Id x$, whence they
are nontrivial.
\end{proof}

It must be expected that the densities $\underline{w}{}_{2k+1}$ with odd indexes, not contributing to the hierarchy of~\eqref{eqKK}, are trivial; this is confirmed by a straightforward calculation of small\/-\/index terms in that auxiliary sequence. The triviality of irrelevant quantities $\underline{w}{}_{2k+1}$, $k\in\BBN$ can be approached, e.g., by using the technique from~\cite{KuperOnNature}; that method's idea is a realisation of the generating function for --\,in retrospect, trivial\,-- conserved densities $\underline{w}{}_{2k+1}$ via the non\/-\/trivial quantities $\underline{w}{}_{2k}$ and the generating function for them. 

In the meantime, the problem of finding recurrence relations for the hierarchy of integrals of motion for the Krasil'shchik\/--\/Kersten system is solved.

\begin{remark}
In the recent papers~\cite{TMPh2006,JPCS14} 
it was shown that Gardner's deformations provide the 
initial data for construction of new integrable systems. 
Applying the algorithm described in~\cite{TMPh2006,JPCS14}
to 
Gardner's deformation~(\ref{eqKKGDExt}-\ref{eqKKGDMiura}) 
for Krasil'shchik\/--\/Kersten's system, we obtain
the Kaup\/--\/Newell hierarchy~\cite{KaupNewell}.
\end{remark}

Let us conclude this paper by recalling that the recurrence relation for the hierarchy of Hamiltonian functionals for the bosonic\/-\/limit system~\eqref{eqKK} is the initial datum for solution of Gardner's deformation problem for the full $N{=}2$ supersymmetric $a{=}1$ Korteweg\/--\/de Vries equation from~\cite{MathieuNew,MathieuOpen}. This technique of recursive construction of the Hamiltonian super\/-\/functionals that depend on the $N{=}2$ superfield was developed in~\cite{HKKW}.

\appendix
\section{}\label{appXcom}
Let $\fg\subseteq\gl(k_0+1)$ be a finite-dimensional 
Lie algebra and $
{g}\in\fg$. 
On the one hand, the element~$g$ is represented
in the space of $(k_0+1)\times(k_0+1)$ matrices. 
On the other hand, the element~$
{g}$ can be represented
in the space of vector fields on some open domain in $\BBR^{k_0+1}$ by using the formula
\[
{g}\longmapsto X_{
{g}} = \bw g \pb,
\]
where
\[
\bw = \begin{pmatrix} 
  \mu & w^1 & \dots & w^{k_0}
\end{pmatrix},
\qquad
\pb = \begin{pmatrix}
  -\frac1\mu\sum_{i=1}^{k_0} w^i\dd_{w^i} \\
  \dd_{w^1} \\
  \vdots \\
  \dd_{w^{k_0}}
\end{pmatrix}.
\]
Let us show that the vector field representation preserves all the commutation relations in~$\fg$, that is, let us verify the identity
\[
[X_{
{g}}, X_{
{h}}] = X_{
{g}}( X_{
{h}}) - X_{
{h}} (X_{
{g}}) = X_{[
{g}, 
{h}]}
\]
for all $
{g}, 
{h}\in\fg$.

By convention, summation over repeated indexes is performed (here, not necessarily over one \emph{upper} and one \emph{lower} index, which is due to the notation for matrix elements of~$
{g}$,\ $
{h}$ and for the components of~$\dd_{\bw}$).

\begin{proof}
We have that
\begin{align*}
X_{
{g}}(X_{
{h}}) = {}& \bw^i g_{ij}  \pb^j  \left( \bw^p h_{pq} \pb^q \right) 
\\
{}={}&
\bw^i g_{i0} \pb^0 (\bw^p h_{pq} \pb^q) + \sum_{j\neq0 } \bw^i g_{ij} \pb^j \left( \bw^p h_{pq} \pb^q \right) 
\\
{}={}&\sum_{p\neq 0}\bw^i g_{i0} \left(-\tfrac1\mu \bw^p\right) h_{pq} \pb^q + 
\bw^i g_{i0} \bw^p h_{p0} \left( -\tfrac1\mu \pb^0\right)
\\
{}&{}\qquad {}  
+ \sum_{j\neq0} \bw^i g_{ij} h_{jq} \pb^q
+ \sum_{j\neq0} \bw^i g_{ij} \bw^p h_{p0} \left( -\tfrac1\mu \pb^j \right)
\\
{}={}& \sum_{p\neq 0}\bw^i g_{i0} \left(-\tfrac1\mu \bw^p\right) h_{pq} \pb^q 
-  \bw^p h_{p0} g_{00} \pb^0 + \sum_{i\neq0} \bw^i g_{i0} \bw^p h_{p0} \left( -\tfrac1\mu \pb^0\right)  
+ \sum_{j\neq0} \bw^i g_{ij} h_{jq} \pb^q 
\\
{}&{} \qquad{}
- \sum_{j\neq0}\bw^p h_{p0} g_{0j} \pb^j
+ \sum_{j\neq0,i\neq 0} \bw^i g_{ij} \bw^p h_{p0} \left( -\tfrac1\mu \pb^j \right) 
\\
{}={}& \sum_{j\neq0} \bw^i g_{ij} h_{jq} \pb^q  - \bw^p h_{p0}g_{0j} \pb^j
 + \sum_{i\neq 0} \bw^i g_{ij} \bw^p h_{p0} \left( -\tfrac1\mu \pb^j \right) 
 + \sum_{p\neq 0}\bw^i g_{i0} \left(-\tfrac1\mu \bw^p\right) h_{pq} \pb^q.
\end{align*}
An almost the same calculation yields that
\begin{align*}
X_{
{h}}(X_{
{g}}) = {}& \bw^p h_{pq}  \pb^q  \left( \bw^i g_{ij} \pb^j \right) 
\\
{}= {}&
 \sum_{q\neq0} \bw^p h_{pq} g_{qj} \pb^j  - \bw^i g_{i0}h_{0q} \pb^q
 + \sum_{p\neq 0} \bw^p h_{pq} \bw^i g_{i0} \left( -\tfrac1\mu \pb^q \right) 
 + \sum_{i\neq 0}\bw^p h_{p0} \left(-\tfrac1\mu \bw^i\right) g_{ij} \pb^j
\end{align*}
Therefore, the commutator of vector fields $X_{
{g}}$ and $X_{
{h}}$ is equal to
\begin{align*}
[X_{
{g}}, X_{
{h}}] = {}& 
\bw^i g_{ij}  \pb^j  \left( \bw^p h_{pq} \pb^q \right) -  \bw^p h_{pq}  \pb^q  \left( \bw^i  f_{ij} \pb^j\right)
\\
{}={}& \sum_{j\neq0} \bw^i g_{ij} h_{jp} \pb^p  - \bw^p h_{p0}g_{0j} \pb^j 
 - \sum_{q\neq0} \bw^p h_{pq} g_{qj} \pb^j  + \bw^i g_{i0}h_{0q} \pb^q
\\
{}={}& \bw^i g_{ij} h_{jp} \pb^p - \bw^p h_{pq} g_{qj} \pb^j   
 = \bw^i \left( g_{ij}h_{jq} - h_{ip}g_{pq} \right) \pb^q
 = \bw^i \left([g,h]\right)_{iq} \pb^q = X_{[
{g},
{h}]}.
\end{align*}
This proves the claim.
\end{proof}

\ack
The authors thank the Organising committee of the 7th 
International workshop
`Group Analysis of Differential Equations and Integrable Systems' 
(15--19 June 2014, Larnaca, Cyprus) for 
a warm atmosphere during the meeting.
The authors thank M.~A.~Nesterenko, D.~A.~Leites, and A.~V.~Mi\-khai\-lov 
for stimulating discussions and constructive criticism;
the authors are grateful to the anonymous referee for 
remarks 
and advice.
The research of A.\,V.\,K.\ was partially supported by JBI~RUG project~103511 (Groningen); the research of
A.\,O.\,K.\ was partially supported by JBI~RUG project~135110 (Groningen).


\section*{References}

\medskip
\providecommand{\newblock}{}

\end{document}